\newcommand{\sfb}{\sf\bf}
\newsavebox{\@alignepsbox}
\newlength{\@aligneps}
\newcommand{\includegraphicstop}[2][]{%
\sbox{\@alignepsbox}{\includegraphics[#1]{#2}}%
\setlength{\@aligneps}{-\ht\@alignepsbox}%
\addtolength{\@aligneps}{2ex}%
\raisebox{\@aligneps}{\usebox{\@alignepsbox}}}
\renewenvironment{proof}[1][]{\noindent \em Proof\ifthenelse{\equal{#1}{}}{}{ (#1)}:~}{}
\newcommand{\network}{\mathcal{N}}
\newcommand{\val}{\bar S} 
\newcommand{\dep}{\operatorname{dep}}
\newcommand{\energy}[1]{\operatorname{e}_{#1}}
\newcommand{\partfun}[1]{Z_{#1}}
\newcommand{\separator}[2]{\operatorname{sep}(#1,#2)}
\newcommand{\difference}[2]{\operatorname{diff}(#1 \rightarrow #2)}
\newcommand{\real}{\mathbb{R}}
\newcommand{\Message}[2]{m_{#1\rightarrow #2}}
\newcommand{\partseqs}{\mathcal{P\!S}}
\newcommand{\B}{\mathcal{B}}
\newcommand{\F}{\mathcal{F}}
\newcommand{\I}{\mathcal{I}}
\newcommand{\R}{\mathcal{R}}
\renewcommand{\S}{\mathcal{S}}
\newcommand{\X}{\mathcal{X}}
\newcommand{\Y}{\mathcal{Y}}
\newcommand{\width}{w}
\newcommand{\sample}{\texttt{Sample}}
\newcommand{\edgesToR}{E^r_T}
\newcommand{\Ebp}[2]{E^{\textrm{bp}}_{#1}(#2)}
\newcommand{\Def}[1]{{\bfseries #1}}
\newcommand{\TargetE}{E^{\star}}
\newcommand{\parHead}[1]{\Final{\paragraph{#1}}}
\newcommand{\Final}[1]{#1}
\renewcommand{\Final}[1]{}
\newcommand{\Design}[1]{{\sf Designs}^{\star}(#1)}
\newcommand{\NumDesign}{\ensuremath{\#}{\sf Designs}\xspace}
\newcommand{\IS}[1]{{\sf IndSets}(#1)}
\newcommand{\Nuc}[1]{{\sf #1}}
\newcommand{\Ab}{\Nuc{A}}
\newcommand{\Cb}{\Nuc{C}}
\newcommand{\Gb}{\Nuc{G}}
\newcommand{\Ub}{\Nuc{U}}
\newcommand{\GCb}{\Gb\Cb}
\newcommand{\Software}[1]{{\ttfamily #1}}
\newcommand{\ourprog}{\Software{RNARedPrint}}
\newcommand{\evalfor}[2]{#1\llbracket{}#2\rrbracket{}}
\newcommand{\substitute}[2]{#1\!\oplus\!#2}
\renewcommand{\gets}{:=}
\title
{Fixed-Parameter Tractable Sampling for RNA Design with Multiple Target Structures}
\author
{Stefan Hammer\,$^{\text{\sfb 1,2,3}}$, Yann Ponty\,$^{\text{\sfb 4,5,}\star}$, Wei Wang\,$^{\text{\sfb 4}}$ and Sebastian Will\,$^{\text{\sfb 2}}$}
\institute{$^{\text{\sf 1}}$University Leipzig, Department of Computer Science and Interdisciplinary Center for Bioinformatics, 04107 Leipzig, Germany;
$^{\text{\sf 2}}$University of Vienna, Faculty of Chemistry, Department of Theoretical Chemistry, 1090 Vienna, Austria;
$^{\text{\sf 3}}$University of Vienna, Faculty of Computer Science, Research Group Bioinformatics and
Computational Biology, 1090 Vienna, Austria;
$^{\text{\sf 4}}$CNRS UMR 7161 LIX, Ecole Polytechnique, Bat. Turing, 91120 Palaiseau, France;
and $^{\text{\sf 5}}$AMIBio team, Inria Saclay, Bat Alan Turing, 91120 Palaiseau, France}
\begin{document}

\maketitle

\begin{abstract}
\textbf{Motivation:} The design of multi-stable RNA molecules has important applications in biology, medicine, and biotechnology. Synthetic design approaches profit strongly from effective in-silico methods, which can tremendously impact their cost and feasibility. \\
  \textbf{Results:} We revisit a central ingredient of most in-silico
  design methods: the sampling of sequences for the design of
  multi-target structures, possibly including pseudoknots. For this
  task, we present the efficient, tree decomposition-based algorithm
  \ourprog{}. Our fixed parameter tractable approach is underpinned by
  establishing the $\#${\sf P}-hardness of uniform sampling. Modeling
  the problem as a constraint network, \ourprog{} supports generic
  Boltzmann-weighted sampling for arbitrary additive RNA energy
  models; this enables the generation of RNA sequences meeting
  specific goals like expected free energies or \GCb-content. Finally,
  we empirically study general properties of the approach and generate
  biologically relevant multi-target Boltzmann-weighted designs for a
  common design benchmark. Generating seed sequences with \ourprog{}, we demonstrate significant improvements over the previously best multi-target sampling strategy (uniform sampling).\\
  \textbf{Availability:} Our software is freely available at: \url{https://github.com/yannponty/RNARedPrint}\\
  \textbf{Contact:} \href{yann.ponty@lix.polytechnique.fr}{yann.ponty@lix.polytechnique.fr}\\
\end{abstract}

\section{Introduction}
\parHead{Design, applications and motivation for multiple design.}Synthetic biology endeavors the engineering of artificial biological
systems, promising broad applications in biology, biotechnology and
medicine. Centrally, this requires the design of biological
macromolecules with highly specific properties and programmable functions.
In particular, RNAs present themselves as well-suited tools for
rational design targeting specific functions~\citep{Kushwaha2016}. RNA function is tightly
coupled to the formation of secondary structure, as well as changes in
base pairing propensities and the accessibility of regions, e.g. by
burying or exposing interaction sites~\citep{Rodrigo2014}. At the same time, the
thermodynamics of RNA secondary structure is well understood and its prediction is
computationally tractable~\citep{McCaskill1990}. Thus,  structure can serve as effective
proxy within rational design approaches, ultimately targeting catalytic~\citep{Zhang2013} or regulatory~\citep{Rodrigo2014} functions.

\parHead{Motivating multiple RNA design.} The function of many RNAs
depends on their selective folding into one or several alternative
conformations. Classic examples include riboswitches, which
notoriously adopt different stable structures upon binding a specific
ligand. Riboswitches have been a popular application of rational
design~\citep{Wachsmuth2013,Domin2017}, partly motivated by their
capacity to act as biosensors~\citep{Findeiss2017}. At the
co-transcriptional level, certain RNA families feature alternative,
evolutionarily conserved, transient structures~\citep{Zhu2013}, which
facilitate the ultimate adoption of functional structures at full
elongation.  More generally, simultaneous compatibility to multiple
structures is a relevant design objective for engineering kinetically
controlled RNAs, finally targeting prescribed folding pathways. Thus,
modern applications of RNA design often target multiple structures,
additionally aiming at other features, such as specific
\GCb-content~\citep{Reinharz2013} or the presence/absence of
functionally relevant motifs (either anywhere or at specific
positions)~\citep{Zhou2013}; these objectives motivate flexible
computational design methods.

\parHead{On the importance of sampling for design.}
Many computational methods for RNA design rely on similar overall
strategies: initially generating one or several \Def{seed} sequences
and optimizing them subsequently. In many cases, the seed quality was
found to be critical for the empirical performance of RNA design
methods~\citep{Levin2012}. For instance, random seed generation
improves the prospect of subsequent optimizations, helping to overcome
local optima of the objective function, and increases the diversity
across designs~\citep{Reinharz2013}.  For single-target approaches,
\Software{INFO-RNA}~\citep{Busch2006} made significant improvements
mainly by starting its local search from the minimum energy sequence
for the target structure instead of (uniform) random sequences for the
early \Software{RNAinverse} algorithm~\citep{Hofacker1994}. This
strategy was later shown to result in unrealistically high
\GCb-contents in designed sequences. To address this issue,
\Software{IncaRNAtion}~\citep{Reinharz2013} controls the \GCb-content
through an adaptive sampling strategy.

\parHead{Specificities and similarities of multi-target design.}
Specifically, for multi-target design, virtually all available methods~\citep{Lyngsoe2012,HoenerzuSiederdissen2013,Taneda2015,Hammer2017} follow the same overall generation/optimiza\-tion scheme.
Facing the complex sequence constraints induced by multiple targets, early methods such as \Software{Frnakenstein}~\citep{Lyngsoe2012} and \Software{Modena}~\citep{Taneda2015} did not attempt to solve sequence generation systematically, but rely on \emph{ad-hoc} sampling strategies.
Recently, the \Software{RNAdesign}
approach~\citep{HoenerzuSiederdissen2013}, coupled with powerful local
search within \Software{RNAblueprint}~\citep{Hammer2017}, solved the
problem of sampling seeds from the uniform distribution for multiple
targets. These methods adopt a graph coloring perspective, initially
decomposing the graph hierarchically using various decomposition
algorithms, and \Def{precomputing} the number of valid sequences
within each subgraph. The decomposition is then reinterpreted as a
decision tree to perform a \Def{stochastic backtrack}, inspired by
Ding and Lawrence~\citep{Ding2003}. Uniform sampling is achieved by
choosing individual nucleotide assignments with probabilities derived
from the subsolution counts. The overall complexity of
\Software{RNAdesign} grows like $\Theta(4^{\gamma})$, where the
parameter $\gamma$ is bounded by the length of the designed RNA;
typically, the decomposition strategy achieves much lower $\gamma$.

\parHead{Motivation.} The exponential time and space requirements of the \Software{RNAdesign} method already raise the question of the \Def{complexity of (uniform) sampling for multi-target design}. Since stochastic backtrack can be performed in linear time per sample, the method is dominated by the precomputation step, which requires counting valid designs. Thus, we focus on the question: \emph{Is there a polynomial-time algorithm to count valid multi-target designs?} In Section~\ref{sec:counting}, we answer in the negative, showing that there exists no such algorithm unless ${\sf P}={\sf NP}$. Our result relies on a surprising bijection (up to a trivial symmetry) between valid sequences  and independent sets of a bipartite graph, being the object of recent breakthroughs in approximate counting complexity~\citep{Bulatov2013,Cai2016}.
The hardness of counting (and conjectured hardness of sampling) does not preclude, however, practically applicable algorithms for counting and sampling. In particular, we wish to extend the flexibility of multi-structure design, leading to the following questions: \emph{How to sample, generally, from a Boltzmann distribution based on expressive energy models? How to enforce additional constraints, such as the \GCb-content, complex sequence constraints, or the energy of individual structures?}

\begin{figure*}[t]
{\centering\includegraphics[width=.8\textwidth]{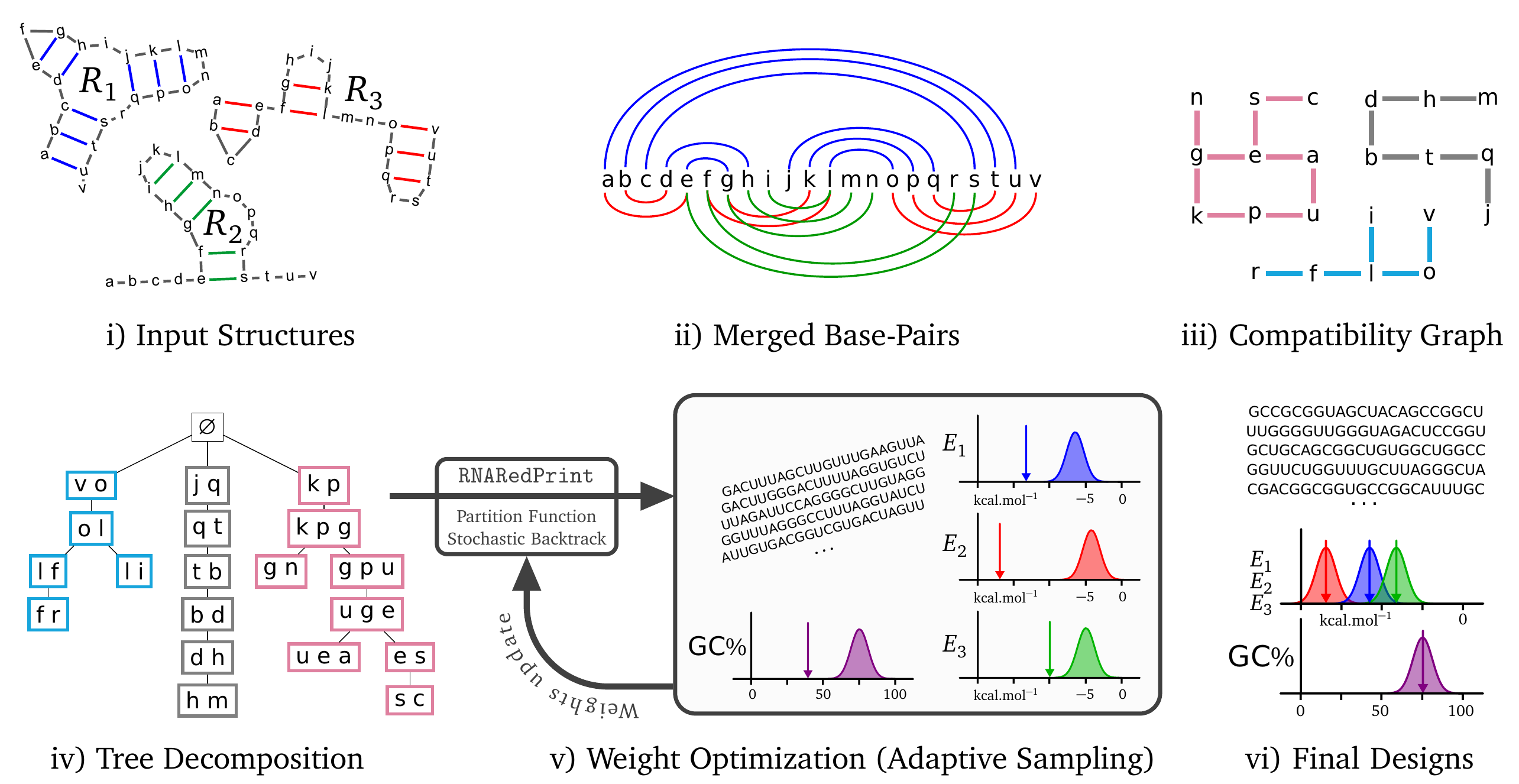}\\}
\caption{General outline of \ourprog{} for base pair-based energy models. From a set of target secondary structures (i), base-pairs are merged (ii) into a (base pair) dependency graph (iii) and transformed into a tree decomposition (iv). The tree is then used to compute the partition function, followed by a Boltzmann sampling of valid sequences (v). An adaptive scheme learns weights to achieve targeted energies and \GCb-content, leading to the production of suitable designs (vi).}
\label{fig:workflow}
\end{figure*}

To answer these questions, we introduce a generic framework (illustrated in~Fig.~\ref{fig:workflow}) enabling efficient Boltzmann-weighted sampling over RNA sequences with multiple target structures (Section~\ref{sec:FPT}). Guided by a \Def{tree decomposition} of the network, we devise dynamic programming to compute partition functions and sample sequences from the Boltzmann distribution%
. We show that these algorithms are \Def{fixed-parameter tractable} for the \Def{treewidth}%
%
; in practice, we limit this parameter by using state-of-the-art tree decomposition algorithms.
By evaluating (partial) sequences in a weighted constraint
network, we support arbitrary multi-ary constraints and thus
arbitrarily complex energy models,
notably subsuming all commonly
used RNA energy models%
.  Moreover, we describe an \Def{adaptive
  sampling} strategy to control the free energies of the individual
target structures and the \GCb-content%
. %
We observe that sampling based on less complex RNA energy models
(taking only the most important energy contributions into account)
still allows targeting realistic RNA energies in the well-accepted
Turner RNA energy model. The resulting combination of efficiency and
high accuracy finally enables generating biologically relevant
multi-target designs in our final application of our overall strategy
to a large set of multi-target RNA design instances from a
representative benchmark Section~\ref{sec:results}).

\section{Definitions and problem statement}
\label{sec:problem-statement}

An \Def{RNA sequence $S$} is a word over the \Def{nucleotides
  alphabet} $\Sigma=\{\Ab,\Cb,\Gb,\Ub\}$; let $\S_n$ denote the set of
sequences of length $n$. An \Def{RNA (secondary) structure $R$ of
  length $n$} is a set of \Def{base pairs} $(i,j)$, where
$1\leq i<j\leq n$, where for all different $(i,j), (i',j')\in R$:
$\{i,j\}\cap\{i',j'\}=\emptyset$ (``degree $\leq$ 1'').
%
%
\Def{Valid base pair} must pair bases from 
$\B:=\left\{\{\Ab,\Ub\},\{\Gb,\Cb\},\{\Gb,\Ub\}\right\}.$
Consequently, $S$ is \Def{valid} for $R$, iff $\{S_i,S_j\}\in \B$ for
all $(i,j)\in R$.

We consider a fixed set of target RNA structures
$\R:=\{R_1, \dots, R_k\}$ for sequences of length $n$. $\R$ induces a
\Def{base pair dependency graph} $G_{\R}$ with nodes $\{1,\dots,n\}$
and edges $\bigcup_{\ell\in[1,k]} R_\ell$, which describe the minimal
dependencies present in all relevant settings due to the requirement
of canonical base pairing.

One can interpret the valid sequences for $\R$ as colorings of
$G_{\R}$ in a slightly modified graph coloring variant, where the
colors (from $\Sigma$) assigned to adjacent vertices of $G_{\R}$ must
constitute valid base pairs.
We define the energy of a sequence $S$ based on the set of structures
$\R$ as $E_\R(S) \in \mathbb{R}\cup\{\infty\}$. In
our setting, the energy $E_\R(S)$ is additively composed of
the energies of the single RNA structures in an RNA energy model, as
well as sequence dependent features like \GCb-content.  Furthermore note
that $E_\R(S)$ is finite iff $S$ is valid for each structure
$R_1,\dots,R_k$.

At the core of this work, we study the computation of partition
functions over sequences.\smallskip\\
\textbf{Central problem (Partition function over sequences).}
  Given an energy function $E$ and a set $\R$ of structures of length $n$, compute the
  partition function
  \begin{equation}
    \label{eq:mainproblem}
    \partfun{E_\R} = \sum_{S\in\S_n} \exp(-\beta E_\R(S)),
  \end{equation}
  where $\beta$ denotes the inverse pseudo-temperature, and $\S_n$ the set of sequences of length $n$.

%
  As we elaborate in subsequent sections, our approach relies on
  breaking down the energy function $E_\R(S)$ into additive
  components, each depending on only few sequence positions. Given
  $\R$, we express $E_\R(S)$ as the sum of energy contributions $f(S)$
  over a set $\F$ (of functions $f:\S_n\to\real$),
  s.t.~$E_\R(S)=\sum_{f\in\F} f(S)$. This captures realistic RNA
  energy models---including nearest neighbor models,
  e.g.~\citet{Turner2009}, and even pseudoknot models,
  e.g. \citet{Andronescu2010}, while bounding the dependencies to
  sequence positions introduced by each single $f$. Formally, define
  the \Def{dependencies} $\dep(f)$ of $f$ as the minimum set of sequence
  positions $\I\subseteq\{1,\dots,n\}$, where $f(S)=f(S')$ for all
  sequences $S$ and $S'$ that agree at the positions in $\I$.

  Each set $\F$ of functions (on sequences of length $n$) induces a
  \Def{dependency graph} on sequence positions, namely the hypergraph
  $G_\F=(\{1,\dots,n\} ,\{\dep(f)\mid f\in \F\})$.  Our algorithms
  will critically rely on a \Def{tree decomposition} of the dependency
  graph, which we define below.
\begin{definition}[Tree decomposition and width]
  \label{def:treedecomp}
  Let $G=(X, E)$ be a hypergraph. A \Def{tree decomposition} of $G$ is
  a pair $(T,\chi)$, where $T$ is an unrooted tree/forest, and (for
  each $v\in T$) $\chi(v)\subseteq X$ is a set of vertices assigned to
  the node tree $v\in T$, such that
\begin{enumerate}
\item each $x\in X$ occurs in at least one $\chi(v)$;
\item for all $x\in X$, $\{ v \mid x \in \chi(v) \}$ induces a connected subtree of $T$;
\item for all $e\in E$, there is a node $v\in T$, such that $e\subseteq\chi(v)$.
\end{enumerate}
The \Def{width} of a tree decomposition $(T,\chi)$ is defined as
$\width(T,\chi) = \min_{u\in T} |\chi(u)| - 1 $. The \Def{treewidth}
of $G$ is the smallest width of any tree decomposition of $G$.
\end{definition}

\section{An FPT algorithm for the partition function and sampling of Boltzmann-weighted designs}
\label{sec:FPT}

For our algorithmic description, we translate the concepts of
Section~\ref{sec:problem-statement} to the formalism of constraint networks, here
specialized as RNA design network. This allows us to base our
algorithm on the cluster tree elimination (CTE) of~\citet{Dechter2013}.
In the RNA design network, (partially determined) RNA sequences
replace the more general concept of (partial) assignments in
constraint networks. Partially determined RNA sequences, for short
\Def{partial sequences}, are words $\val$ over the alphabet
$\Sigma\cup\{?\}$ equivalently representing the set ${\mathcal S}(\val)$
of RNA sequences, where for positions $1\leq i\leq n$,
$\val_i\in\Sigma$ implies $S_i=\val_i$ for all $S\in{\mathcal
  S}(\val)$. The positions $1\leq i\leq n$, where $\val_i\in\Sigma$,
are called \Def{determined} by $\val$ and form its \Def{domain}.
Since the functions $f\in\F$ of Section~\ref{sec:problem-statement}
depend on only the subset $\dep(f)$ of sequence positions, one can
evaluate them for partial sequences $\val$ that determine (at least)
the nucleotides at all positions in $\dep(f)$. Thus for functions $f$
and partial sequences $\val$ that determine $\dep(f)$, we write
$\evalfor{f}{\val}$ to \Def{evaluate $f$ for $\val$}; i.e.{}
$\evalfor{f}{\val} := f(S),$ for any sequence
$S\in{\mathcal S}(\val)$.

\begin{definition}
An \Def{RNA design network} (for sequences of length $n$) is a tuple $\network=(\X,\F)$, where\vspace{-6pt}
\begin{itemize}
\item $\X$ is the set of sequence positions $1,\dots,n$
\item $\F$ is a set of \Def{functions} $f:\S_n\to\real$
\end{itemize}
\end{definition}

The \Def{energy $\energy{\network}(S)$ of a sequence} $S$ in a network
$\network$ is defined as sum of the values of all functions in
$f\in\F$ evaluated for $S$, i.e.
$\energy{\network}(S) := \sum_{f\in\F} \evalfor{f}{S}.$

The network energy $\energy{\network}(S)$ corresponds to the energy in
Eq.~$(\ref{eq:mainproblem}),$ where this energy is modeled as sum of
the functions in $\F$. Consequently, $\partfun{\R}$ of
Eq.~$(\ref{eq:mainproblem})$ is modeled as network partition function
$\partfun{\network} := \sum_{S}\exp(-\beta\energy{\network}(S)) = \sum_{S}\prod_{f\in\F} \exp( -\beta\cdot
\evalfor{f}{S} ).$


\subsection{Partition function and Boltzmann sampling through stochastic backtrack}\label{sec:PF}
The minimum energy, counting, and partition function
over RNA design network can be computed by dynamic programming based
on a tree decomposition of the network's dependency graph
(i.e. cluster tree elimination).
We focus on the efficient computation of the partition
function. 

%

We require additional definitions: A \Def{cluster tree} for the
network $\network=(\X,\F)$ is a tuple $(T,\chi,\phi)$, where
$(T,\chi)$ is a tree decomposition of $G_\F$, and $\phi(v)$ represents
a set of functions $f$, each uniquely assigned to a node $v\in T$;
$\dep(f)\subseteq\chi(v)$ and $\phi(v)\cap \phi(v')=\varnothing$ for
all $v\neq v'$.  For two nodes $v$ and $u$ of the cluster tree, define
their \Def{separator} as $\separator{u}{v} := \chi(u)\cap\chi(v)$;
moreover, we define the \Def{difference positions} from $u$ to an
adjacent $v$ by $\difference{u}{v}:=\chi(v) - \separator{u}{v}$.

For a set $\Y$ of sequence positions, write $\partseqs(\Y)$ to
denote the set of all partial sequences that determine exactly the positions
in $\Y$; furthermore, given partial sequences $\val$ and
$\val'$, we define the \Def{combined partial sequence $\substitute{\val'}{\val''}$} such that
$$
(\substitute{\val}{\val'})_i :=
\begin{cases}
  \val'_i & \text{if } \val'_i\in \Sigma\\
  \val_i & \text{otherwise}
\end{cases}
$$

Finally we assume, w.l.o.g., that all position difference sets
$\difference{u}{v}$ are singleton: for any given
cluster tree, an equivalent (in term of treewidth) cluster tree can
always be obtained by inserting at most $\Theta(|\X|)$ additional
clusters.

Let us now consider the \Def{computation of the partition function}.
Given is the RNA design network $\network=(\X,\F)$ and its cluster
tree decomposition $(T,\chi,\phi)$.  W.l.o.g., we assume that $T$ is
connected and contains a dedicated node $r$, with
$\chi(r)=\varnothing$ and $\phi(r)=\varnothing$, added as a virtual
root $r$ connected to a node in each connected component of $T$.  Now,
we consider the set of directed edges $\edgesToR{}$ of $T$ oriented to
$r$; define $T_r(u)$ as the induced subtree of
$u$. Algorithm~\ref{alg:pf} computes the partition function by passing
messages along these directed edges $u\to v$ (i.e. always from some
child $u$ to its parent $v$). Each message is a function that depends
on the positions $\dep(m)\subseteq \X$ and yields a partition function
in $\real\cup\{\infty\}$. The message from $u$ to $v$ represents the
partition functions of the subtree of $u$ for all possible partial
sequences in $\partseqs(\separator{u}{v})$. Induction over $T$ lets us show
the correctness of the algorithm
(Supp. Mat.~\ref{appsec:correctness}).  After running
Alg.~\ref{alg:pf}, multiplying the 0-ary messages sent to the root $r$
yields the total partition function:
\begin{math}
  \partfun{\network} = \prod_{(u\to{}r)\in T} \evalfor{\Message{u}{r}}{\varnothing}.
\end{math}

\begin{algorithm}[t]
  \KwData{Cluster tree $(T,\chi,\phi)$} \KwResult{Messages
    $\Message{u}{v}$ for all $(u\to{}v)\in T$; i.e.~partition
    functions of the subtrees of all $v$ for all possible partial
    sequences determining exactly the positions $\separator{u}{v}$.}
 \For{$u\to{}v\in T$ in postorder}{
  \For{$\val\in\partseqs(\separator{u}{v})$}{
    $x\gets 0$\;
    \For{$\val'\in\partseqs(\difference{u}{v})$}{
     $p \gets$ product( $exp(-\beta \evalfor{f}{\substitute{\val}{\val'}})$ for $f\in \phi(u)$ )\\
     ${}\quad\qquad \cdot\ $product( $\evalfor{\Message{w}{u}}{\substitute{\val}{\val'}}$ for $(w\to{}u)\in T$ )\;
     $x \gets x + p$\;
   }
   $\evalfor{\Message{u}{v}}{\val} \gets x$\;
  }
  \Return {$m$}\;
  }
 \caption{FPT computation of the partition function using
   dynamic programming (CTE). }\label{alg:pf}
\end{algorithm}

\SetKwProg{Fn}{Function}{}{}
 \SetKwFunction{Sample}{$\sample$}
 \SetKwFunction{Random}{UnifRand}

The partition functions can then direct a \Def{stochastic backtrack} to achieve \Def{Boltzmann sampling of sequences}, such that one samples from the Boltzmann distribution of a given design network $\network$. The sampling algorithm assumes that the cluster tree was expanded and the messages $\Message{u}{v}$ for the edges in $\edgesToR{}$ are already generated by Algorithm \ref{alg:pf} for the expanded cluster tree.
Algorithm~\ref{alg:sampling} defines the recursive procedure
$\sample(u,\val)$, which returns---randomly drawn from the
Boltzmann distribution---a partial sequence that determines all
sequence positions in the subtree rooted at $u$.  Called on $r$ and
the empty partial sequence, which does not determine any positions,
the procedure samples a random sequence from the Boltzmann
distribution.


\subsection{Computational complexity of the multiple target sampling algorithm}\label{sec:complexity}


Note that in the following complexity analysis, we omit time and space for computing the
tree decomposition itself, since we observed that the computation time
of tree decomposition (\Software{GreedyFillIn}, implemented in
\Software{LibTW}~by \citet{Dijk2006}) for multi-target sampling is
negligible compared to Alg.~\ref{alg:pf}
(Supp. Mat.~\ref{appsec:treedecomp} and
\ref{appsec:dependency-cliques}).

We define the \emph{maximum separator size} $s$ as
$\max_{u,v\in V} | \separator{u}{v} |$ and denote the maximum size of
$\difference{u}{v}$ over $(u,v)\in\edgesToR{}$ as $D$.  In the absence
of specific optimizations, running Alg.~\ref{alg:pf} requires
$\mathcal{O}((|\F|+|V|)\cdot 4^{w+1})$ time and
$\mathcal{O}(|V|\cdot4^s)$ space
(Supp. Mat.~\ref{appsec:algcomplexity}); Alg.~\ref{alg:sampling} would
require $\mathcal{O}((|\F|+|V|)\cdot 4^D)$ per sample on arbitrary
tree decompositions
(Supp. Mat.~\ref{appsec:algcomplexity}). W.l.o.g. we assume that
$D=1$; note that tree decompositions can generally be transformed,
such that $\difference{u}{v}\leq 1$.
Moreover, the size of $\F$ is linearly bounded: for $k$
input structures for sequences of length $n$, the energy function is
expressed by $\mathcal{O}(n\,k)$ functions. Finally, the number of cluster
tree nodes is in $O(n)$, such that $|\F|+|V| \in \mathcal{O}(n\,k)$.

\begin{algorithm}
 \KwData{Node $u$, partial sequence $\val\in\partseqs(\separator{u}{v})$;\newline
 Cluster tree $(T,\chi,\phi)$ and partition functions $\Message{u'}{v'}[\val']$, $\forall (u'\to{}v')\in T$ and $\val'\in\partseqs(\separator{u'}{v'})$.}
 \KwResult{Boltzmann-distributed random partial sequence for the subtree rooted at $u$, specializing a partial sequence $\val$.}
 \Fn{\Sample$(u,\val;T,\chi,\phi,m)$}{
   $r \gets \Random(\evalfor{\Message{u}{v}}{\val})$\;
   \For{$\val'\in\partseqs(\difference{u}{v})$}{

     $p \gets$ product( $exp(-\beta \evalfor{f}{\substitute{\val}{\val'}})$ for $f\in \phi(u)$\\
     ${}\quad\qquad \cdot\ $product( $\evalfor{\Message{w}{u}}{\substitute{\val}{\val'}}$ for $(w\to{}u)\in T$ )\;
%
  	  $r \gets r - p$\;
  	  \If{$r<0$}{
  	    $\val_{\rm res} \gets \substitute{\val}{\val'}$\;
  	  \For{$(v\to{}w)\in T$}{
  	      $\val_{\rm res} \gets \substitute{\val_{\rm res}}{\Sample(v,\substitute{\val}{\val'};T,\chi,\phi,m)}$\;
     }
  	  \Return {$\val_{\rm res}$}
  	  \;}
 }
 }
 \caption{Stochastic backtrack algorithm for partial sequences in the Boltzmann distribution.}\label{alg:sampling}
\end{algorithm}

\begin{theorem}[Complexities]\label{th:complexities}
  For sequence length $n$, $k$ target structures, treewidth $w$ and
  a base pair dependency graph having $c$ connected components, $t$ sequences
  are generated from the Boltzmann distribution  in
  $O( n\, k \, 4^{w+1} + t\, n\, k )$ time.
\end{theorem}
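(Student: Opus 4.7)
The plan is to bound separately the cost of the one-time precomputation of partition functions (Algorithm~\ref{alg:pf}) and the per-sample cost of the stochastic backtrack (Algorithm~\ref{alg:sampling}), and then combine them. The analysis proceeds by first bounding the relevant input sizes ($|\F|$ and $|V|$) and then carefully summing the work done at each node of the cluster tree.

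First, I would bound the input sizes in terms of $n$ and $k$. Each of the $k$ target structures contributes at most $\lfloor n/2 \rfloor$ base-pair functions to $\F$, and per-position contributions (such as \GCb-content terms) add $O(n)$ additional functions, so $|\F| \in O(nk)$. For the cluster tree, a tree decomposition of width $w$ over $n$ vertices can be normalized to have $O(n)$ clusters; the WLOG transformation enforcing $|\difference{u}{v}| \le 1$ inserts at most $O(n)$ intermediate clusters (one per vertex introduction), so $|V| \in O(n)$ while the treewidth $w$ is preserved. The virtual root $r$ with $\chi(r) = \phi(r) = \varnothing$ links the $c$ connected components of the dependency graph into a single cluster tree at no asymptotic cost, which is why $c$ does not appear in the final bound.

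Next, I would analyze Algorithm~\ref{alg:pf}. For a directed edge $u \to v$, the outer loop ranges over $\partseqs(\separator{u}{v})$, whose size is bounded by $4^{|\separator{u}{v}|} \le 4^{w+1}$; the inner loop over $\partseqs(\difference{u}{v})$ has size at most $4$ under the $D=1$ assumption; the body performs two products of size $O(|\phi(u)| + \deg(u))$. Summing $|\phi(u)| + \deg(u)$ over all nodes of $T$ telescopes to $O(|\F| + |V|)$, because each function $f \in \F$ is assigned to exactly one node and the total degree sums to $2|E(T)| = O(|V|)$. The precomputation therefore runs in time $O((|\F| + |V|) \cdot 4^{w+1}) = O(nk \cdot 4^{w+1})$. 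For Algorithm~\ref{alg:sampling}, each call to $\sample(u,\val)$ iterates at most $4$ times over $\partseqs(\difference{u}{v})$ with body cost $O(|\phi(u)| + \deg(u))$, and the recursion visits each tree edge exactly once per sample; by the same telescoping, each sample costs $O(|\F| + |V|) = O(nk)$, giving $O(tnk)$ for $t$ independent samples. Adding the two contributions yields the claimed bound $O(nk \cdot 4^{w+1} + tnk)$.

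The main technical point I expect to need careful justification is the WLOG normalization to $D = 1$: I would verify explicitly that splitting any edge with $|\difference{u}{v}| > 1$ by a chain of intermediate clusters that introduce one vertex at a time preserves both the width $w$ and the linear bound $|V| \in O(n)$, so that the uniform bound of $4$ on the inner-loop size is legitimate. Once this is in place, the cost accounting is a routine telescoping argument and the theorem follows.
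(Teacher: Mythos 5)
Your proposal is correct and takes essentially the same route as the paper: you bound $|\F|+|V|\in O(n\,k)$, charge each cluster-tree edge at most $4^{w+1}$ enumerated assignments times $O(|\phi(u)|+d_u)$ combined functions, telescope the sum $\sum_u(|\phi(u)|+d_u-1)=|\F|+|V|-1$, and invoke the w.l.o.g.\ normalization $D=1$ to get $O(n\,k)$ per sample, which is exactly the argument of Section~\ref{sec:complexity} together with Proposition~\ref{prop:general-complexity} in the supplement. The only cosmetic difference is that you bound the outer and inner loops separately by $4^{w+1}\cdot 4$ where the paper bounds the joint enumeration by $4^{|\chi(u)|}\le 4^{w+1}$, which changes nothing asymptotically.
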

As shown in Supp. Mat.~\ref{sec:improvedComplexity}, the complexity of the precomputation can be further improved to 
$\mathcal{O}(n\,k\,2^{w+1}\,2^{c})$, where $c$ is the maximum number of connected components represented in a node of the tree decomposition ($c\le w+1$).

\subsection{Sequence features, constraints, and energy
  models.}\label{sec:energy_models}

The functions in $\F$ allow expressing complex features of the
sequences alone, e.g. rewarding or penalizing specific sequence
motifs, as well as features depending on the target structures.
Furthermore, constraints, which enforce or forbid features, are
naturally expressed by assigning infinite penalties to invalid
(partial) sequences. Finally, but less obviously, the framework
captures various RNA energy models with bounded dependencies, which we
describe briefly.

In simple \Def{base pair-based energy models}, energy is defined as
the sum of base pair (pseudo-)energies. If base pair energies
$\Ebp{k}{i,j,x,y}$ (where $i$ and $j$ are sequence positions, $x$ and
$y$ are bases in $\Sigma$) are given for each target structure $\ell$,
s.t.  $ E(S;R_\ell) := \sum_{(i,j)\in R_\ell} \Ebp{k}{i,j,S_i,S_j}$,
we encode the network energy by the set of functions $f$ for each base
pair $(i,j)\in R_\ell$ of each input structure $R_\ell$ that evaluate
to $\ln(\pi_\ell) \Ebp{k}{i,j,\val(x_i),\val(x_j)}$ under partial sequence $\val$;
here, $\pi_\ell>0$ is a weight that controls the
influence of structure $\ell$ on the sampling (as elaborated later).



More complex \Def{loop-based}
 energy models ---e.g.~the Turner model,
which among others includes energy terms for special loops and dangling
ends---can also be encoded as straightforward extensions. An interesting
stripped-down variant of the nearest neighbor model is the
\Def{stacking energy model}. This model assigns non-zero energy
contributions only to stacks, i.e. interior loops with closing base
pair $(i,j)$ and inner base pair $(i+1,j-1)$.

The arity of the introduced functions provides an important bound on the
treewidth of the network (and therefore computational
complexity). Thus, it is noteworthy that the base pair energy model
requires only binary functions; the stacking model, only quarternary
dependencies. This arity is increased in a few cases by the commonly
used Turner 2004 model~\citep{Turner2009} for encoding tabulated
special hairpin and interior loop contributions, which depend on up to
nine bases for the interior loops with a total of 5 unpaired bases
(``2x3'' interior loops)---all other energy contributions (like
dangling ends) still depend on at most four bases of the sequence.

\subsection{Extension to multidimensional Boltzmann sampling}\label{sec:multiBoltzmann}
The flexibility of our framework supports an advanced sampling technique, named multidimensional Boltzmann sampling~\citep{Bodini2010} to (probabilistically) enforce additional constraints.
This technique was previously used to control the \GCb-content~\citep{Waldispuehl2011,Reinharz2013} and dinucleotide content~\citep{Zhang2013} of sampled RNA sequences; it enables explicit control of the free energies $(\TargetE_1,\ldots,\TargetE_k)$ of the single targets. 

Multidimensional Boltzmann sampling requires the ability to \Def{sample from a weighted distribution} over the set of compatible sequences, where the probability of a sequence $S$ with free energies $(E_1,\ldots,E_k)$ for its target structures is
$\mathbb{P}(S\mid \pmb{\pi}) = \frac{\prod_{\ell=1}^{k} \pi_i^{-E_i}}{\partfun{\pmb{\pi}}},$
where $\pmb{\pi}:=(\pi_1\cdots\pi_k)$ is a vector of positive real-valued \Def{weights}, and $\partfun{\pmb{\pi}}$ is the weighted partition function. Such a distribution can be induced by a simple modification of the functions described in Sec.~\ref{sec:energy_models}, where any energy function $E(\val)$ for a structure $\ell$ is replaced by $E'(\val):= \ln(\pi_\ell)\, E(\val)/\beta$. The probability of a sequence $S$ is thus proportional to
$ \prod_{\ell=1}^{k} e^{-\ln(\pi_\ell)\, E_i} = \prod_{\ell=1}^{k} \pi_i^{-E_i}. $

One then needs to \Def{learn a weights vector} $\pmb{\pi}$ such that, on average, the targeted energies are achieved by a random sequences in the weighted distribution, \emph{i.e.} such that  $\mathbb{E}(E_\ell(S)\mid \pmb{\pi})=\TargetE_\ell$,  $\forall\ell\in[1,k]$.
The expected value of $E_\ell$ is always decreasing for increasing weights $\pi_\ell$ (see Supp. Mat.~\ref{sec:weight_derivatives}). More generally, computing a suitable $\pmb{\pi}$ can be restated as a convex optimization problem, and be efficiently solved using a wide array of methods~\citep{Denise2010,Bendkowski2017}.
In practice, we use a simple heuristics which starts from an initial weight vector $\pmb{\pi}^{[0]}:=(e^\beta,\dots,e^\beta)$ and, at each iteration, generates a sample $\mathcal{S}$ of sequences. The expected value of an energy $E_\ell$ is estimated as $\hat\mu_\ell(\mathcal{S}) = \sum_{S\in\mathcal{S}}E_\ell(S)/|\mathcal{S}|$, and the weights are updated at the $t$-th iteration by 
$\pi_\ell^{[t+1]} = \pi_\ell^{[t]}\cdot \gamma^{\hat\mu_\ell(\mathcal{S})-\TargetE_\ell}$. In practice, the constant $\gamma>1$ is chosen empirically to achieve effective optimization.
While heuristic in nature, this basic iteration was elected in our initial version of \ourprog{} because of its reasonable empirical behavior (choosing $\gamma=1.2$). 

A further \Def{rejection step} is applied to the generated structures to retain only sequences whose energy for each structure $R_\ell$ belongs to $[\TargetE_\ell\cdot(1-\varepsilon),\TargetE_\ell\cdot(1+\varepsilon)]$, for $\varepsilon\ge 0$ some predefined \Def{tolerance}. The rejection approach is justified by the following considerations:
i) \emph{Enacting an exact control over the energies would  be technically hard and costly.} Indeed, controlling the energies through dynamic programming would require explicit convolution products, generalizing~\citet{Cupal1996}, inducing additional $\Theta(n^{2k})$ time and $\Theta(n^k)$ space overheads;
ii) \emph{Induced distributions are typically concentrated.} Intuitively, unless sequences are fully constrained individual energy terms are independent enough so that their sum is concentrated around its mean -- the targeted energy (cf Fig.~\ref{fig:energydist}).
For base pair-based energy models and special base pair dependency graph
(paths, cycles\ldots) this property rigorously follows from analytic
combinatorics, see \citet{Bender1983} and
\citet{Drmota1997}. In such cases, the expected number of
rejections before reaching the targeted energies remains constant when
$\varepsilon\ge 1/\sqrt{n}$, and $\Theta(n^{k/2})$ when
$\varepsilon=0$. The \GCb-content of designs can also be controlled,
jointly and in a similar way, as done in
\Software{IncaRNAtion}~\citep{Reinharz2013}.

\section{Results}\label{sec:results}

\subsection{Targeting Turner energies and GC-content}
We implemented the Boltzmann sampling approach (Algorithms
\ref{alg:pf} and \ref{alg:sampling}), performing sampling for given
target structures and weights $\pi_1,\dots,\pi_n$; moreover on top,
multi-dimensional Boltzmann sampling (see
Section~\ref{sec:multiBoltzmann}) to target specific energies and
\GCb-content.  Our tool \ourprog{} evaluates energies according to the base
pair energy model or the stacking energy model, using parameters which
were trained to approximate Turner energies
(Supp. Mat.~\ref{appsec:modelparameters}).
%
%

To capture the realistic Turner model $E_{\mathcal{T}}$, we exploit its very good correlation (supp. Fig~\ref{fig:training-cor}) with our simple stacking-based model $E_{\mathcal{S}}$. Namely, we observe a structure-specific affine dependency between these Turner and stacking energy models, so that $E_{\mathcal{T}}(S_\ell;R_\ell) \approx \gamma_\ell\, E_{\mathcal{S};R_\ell}(S_\ell) + \delta_\ell$ for each structure $R_\ell$. We learn the $(\gamma_\ell,\delta_\ell)$ parameters from a set of sequences generated with homogenous weights $w=e^{\beta}$, tuning only the \GCb-content to a predetermined target frequency.  Finally, we adjust the targeted energy of our stacking model to $E_{\mathcal{S}}^{\star} = (E_{\mathcal{T}}^{\star}- \delta_\ell)/\gamma_\ell$.

To illustrate our above strategy, we sampled $n=1\,000$ sequences
targeting ${\sf GC}\%=0.5$ and different Turner energies for the three
structure targets of an example instance. Fig.~\ref{fig:energydist}A
illustrates how the Turner energy distributions of the shown target
structures can be accurately shifted to prescribed target energies;
for comparison, we plot the respective energy distribution of
uniformly and Boltzmann sampled sequences. Fig.~\ref{fig:energydist}B
summarizes the targeting accuracy for the single structure energies
over a larger set of instances from the \texttt{Modena} benchmark. We
emphasize that only the simple stacking energies are directly targeted
(at $\pm$10\% tolerance, expect for a few hard instances). Due to our
linear adjustment, we finally achieve well-controlled energies in the
realistic Turner energy model.
\begin{figure*}[t]
  \begin{center}
    {\sf \bfseries A}\includegraphicstop[width=0.57\textwidth]{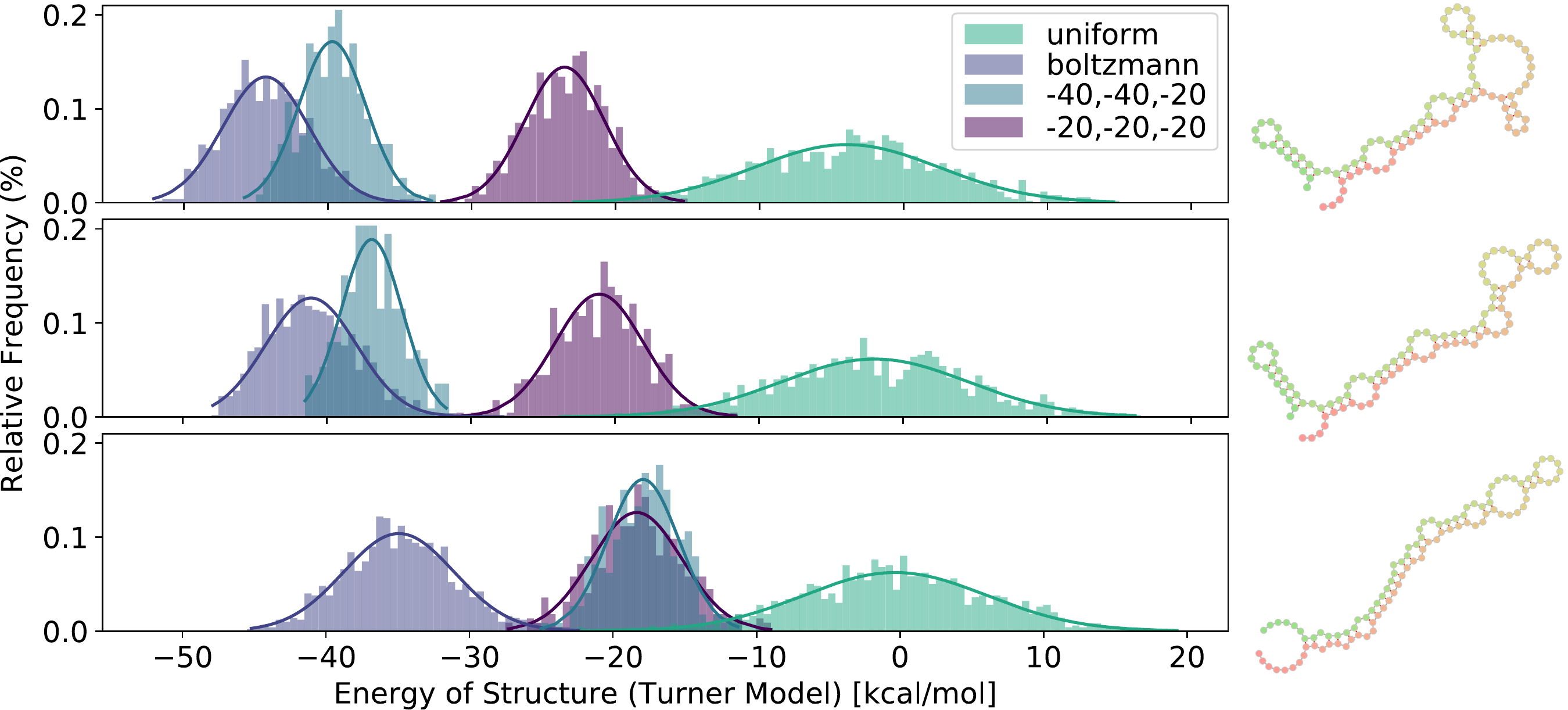}
    {\sf \bfseries B}\includegraphicstop[width=0.32\textwidth]{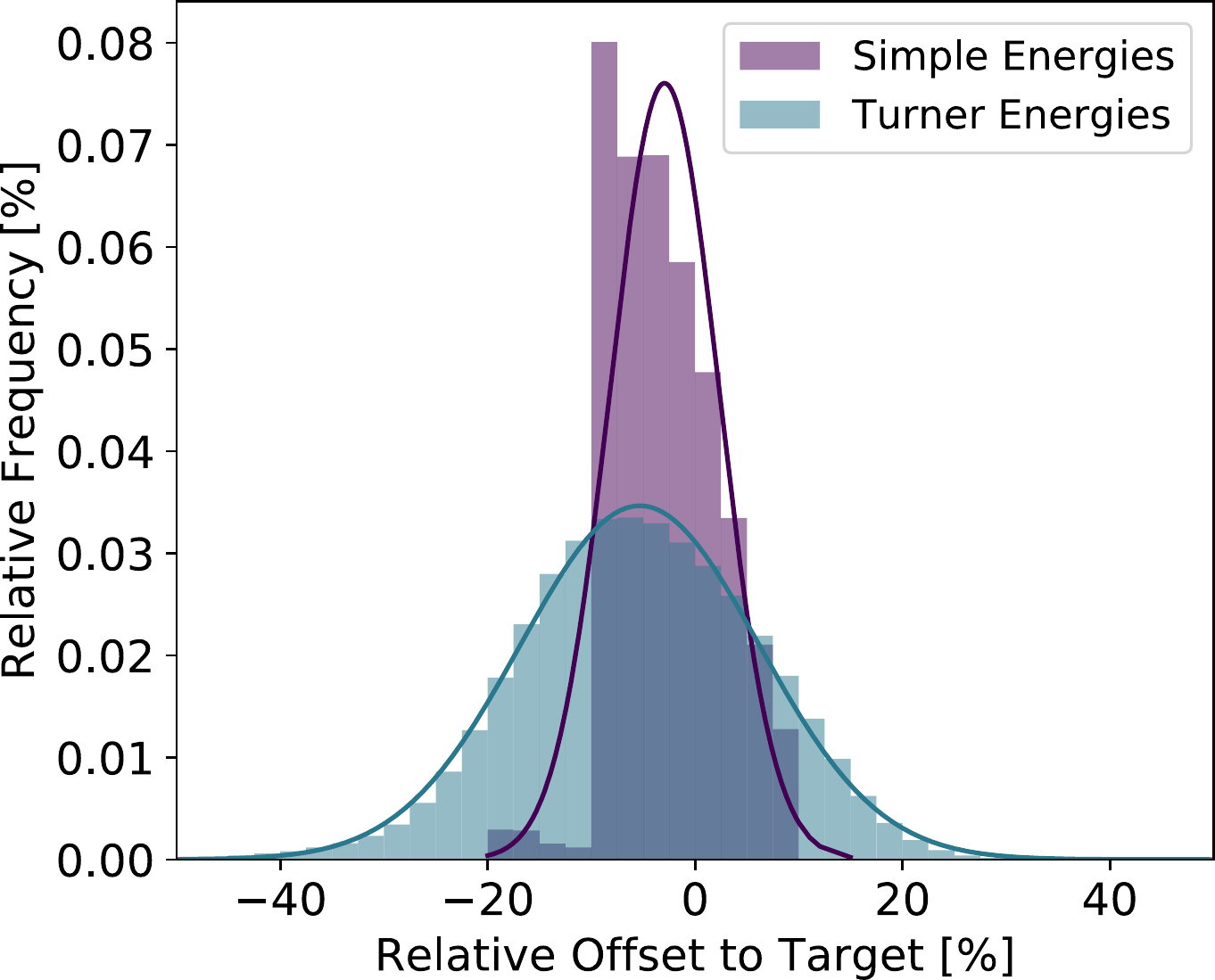}
  \end{center}
  \caption{%
  	Targeting specific energies using multi-dimensional Boltzmann sampling. (A) Turner energy distributions for three target structures (right) while targeting $(-40,-40,-20)$ and $(-20,-20,-20)$ free energy vectors. For comparison, we show the distribution of uniform and Boltzmann samples; respectively associated with homogenous weights $1$ and $e^\beta$. (B) Accuracy of targeting over all target structures of the \Software{Modena} benchmark instances \texttt{RNAtabupath} (2str), \texttt{3str} and \texttt{4str}. Shown is the relative deviation of targeted and achieved energies in the simple stacking model and the Turner model.
}
  \label{fig:energydist}
\end{figure*}
\subsection{Generating high-quality seeds for further optimization}
We empirically study the capacity of \ourprog{} to improve the generation of seed sequences for subsequent local optimizations, showing an important application in biologically relevant scenario.

For a multi-target design instance, individual reasonable target
energies are estimated from averaging the energy over samples at
relatively high weights $e^{\beta}$, setting the targeted \GCb-content
to 50\%. For each instance of the \Software{Modena}
benchmark~\citep{Taneda2015}, we estimated such target energies and
subsequently generated $1\,000$ seed sequences at these energies.
Our procedure is tailored to produce sequences with similar Turner
energy that favor the stability of the target structures having
moderate \GCb-content; all these properties are desirable objectives
for RNA design.  All sampled sequences are evaluated under the
multi-stable design objective function of \citet{Hammer2017}:
\begin{align}
  \label{eq:blueprintobjective}
    f(S)= \quad & \frac{1}{k} \sum_{\ell=1}^{k} (E(S, R_\ell) - G(S))\notag\\
   & +\ 0.5\frac{1}{\binom{k}{2}} \sum\limits_{1\leq\ell<j\leq k}|E(S,R_\ell) - E(S,R_j)|.
\end{align}

Using our strategy, we generated at least 1\,000 seeds per instance of
the subsets \texttt{RNAtabupath} (2str), \texttt{3str}, and \texttt{4str}, of
the \Software{Modena} benchmark set with 2,3, and 4 structures. The
results are compared, in terms of their value for the objective
function of Eq.~\eqref{eq:blueprintobjective}, to seed sequences
uniformly sampled using \Software{RNAblueprint}~\citep{Hammer2017}
(Table~\ref{tab:benchmark-results}, seeds). Our first analysis reveals
that Boltzmann-sampled sequences constitute better seeds, associated
with better values ($\sim\! 69\%$ improvement on average), compared to
uniform seeds for \emph{all instances} of our benchmark
(Supp. Mat.~\ref{sec:validity}).

Moreover, for each generated seed sequence, we optimized the objective function $(\ref{eq:blueprintobjective})$ through an adaptive  greedy walk consisting of 500 steps. At each step, we resampled (uniformly random) the positions of a randomly selected component in the base pair dependency graph, and accepted the modification only if it resulted in a gain, as described in \Software{RNAblueprint}~\citep{Hammer2017}.
Once again, for all instances, we observe a positive improvement of
the quality of Boltzmann designs over uniform ones, suggesting a
superior potential for optimization ($\sim 32\%$ avg improvement,
Table~\ref{tab:benchmark-results}, optimized). Notably, our subsequent
optimization runs, which are directly inherited from the uniformly
sampling RNABlueprint software, partially level the advantages of
Boltzmann sampling. In future work, we hope to improve this aspect by
exploiting Boltzmann sampling even during the optimization run.

\begin{table}[t]
\centering
\medskip
\begin{tabular}{@{}>{\bf}l@{\quad}>{\tt}l@{\quad}@{\quad}c@{\quad}c@{\quad}c@{}}
             &   \textbf{\textrm{Dataset}}   & {\bfseries\ourprog{}} & \textbf{Uniform} & \textbf{Improvement} \\\toprule
  Seeds      & 2str & 21.67 ($\pm$4.38) & 37.74 ($\pm$6.45) & 73\%\\
             & 3str & 18.09 ($\pm$3.98) & 30.49 ($\pm$5.41) & 71\%\\
             & 4str & 19.94 ($\pm$3.84) & 32.29 ($\pm$5.24) & 63\%\\\midrule

  Optimized  & 2str & 5.84 ($\pm$1.31) & 7.95 ($\pm$1.76) & 28\%\\
             & 3str & 5.08 ($\pm$1.10) & 7.04 ($\pm$1.52) & 31\%\\
             & 4str & 8.77($\pm$1.48) & 13.13 ($\pm$2.13) & 37\% \\ \bottomrule
\end{tabular}\\[1em]

\caption{Comparison of the multi-stable design objective function (
  Eq.~(\ref{eq:blueprintobjective}) of sequences sampled by \ourprog{}
  vs.\ uniform samples for three benchmark sets; before (seeds) and
  after optimization (optimized). We report the average scores with
  standard deviations. Smaller scores are better; the final column
  reports our improvements over uniform sampling.}
\label{tab:benchmark-results}
\end{table}

\section{Counting valid designs is \#{\sf P}-hard}\label{sec:counting}

Finally, we turn to the complexity of $\NumDesign(G)$, the problem of computing the number of valid sequences for a given compatibility graph $G=(V,E)$. Note that this problem corresponds to the partition function problem in a simple $(0/\infty)$-valued base pair model, with $\beta\neq 1$. As previously noted~\citep{Flamm2001}, a set of target structures admits a valid design iff its compatibility graph is bipartite, which can be tested in linear time.
Moreover, without $\{\Gb, \Ub\}$ base pairs, any connected component  $C\in {\rm CC}(G)$ is entirely determined by the assignment of a single of its nucleotides. The number of valid designs is thus simply $4^{\#{\rm CC}(G)}$, where $\#{\rm CC}(G)$ is the \Def{number of connected components}.

The introduction of $\{\Gb, \Ub\}$ base pairs radically changes this picture, and we show that valid designs for a set of structures cannot be counted in polynomial time, unless ${\sf\# P}={\sf FP}$. The latter equality would, in particular, imply the classic ${\sf P}={\sf NP}$, and solving $\NumDesign(G)$ is polynomial time is therefore probably difficult. 

To establish that claim, we consider instances $G=(V_1\cup V_2, E)$ that are connected and bipartite ($(V_1\times V_2) \cap E = \varnothing$), noting that hardness over restricted instances implies hardness of the general problem. Moreover remark that, as observed in Subsec.~\ref{sec:complexity}, assigning a nucleotide to a position $u\in V$ constrains the parity ($\{\Ab,\Gb\}$ or $\{\Cb,\Ub\}$) of all positions in the connected component of $u$. For this reason, we restrict our attention to the counting of valid designs \emph{up to trivial} $(\Ab\leftrightarrow \Cb/\Gb\leftrightarrow \Ub)$ symmetry, by constraining the positions in $V_{1}$ (resp. $V_2$) to only $\Ab$ and $\Gb$ (resp. $\Cb$ and $\Ub$). Let $\Design{G}$ denote the subset of all designs for $G$ under this constraint, noting that $\NumDesign(G) = 2\cdot|\Design{G}|$.

We remind that an \Def{independent set} of $G=(V,E)$ is a subset $V'\subseteq V$ of nodes that are not connected by any edge in $E$. Let $\IS{G}$ denote the set of all independent sets in $G$. 

\begin{proposition}
 $\Design{G}$ is in bijection with $\IS{G}$.
\end{proposition}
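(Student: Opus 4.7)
The plan is to construct an explicit bijection $\Phi: \Design{G} \to \IS{G}$ by identifying each nucleotide assignment with the subset of vertices carrying the ``forbidden'' base on its side of the bipartition, together with its inverse $\Psi$. The key observation that drives everything is that once we restrict $V_1$ to $\{\Ab,\Gb\}$ and $V_2$ to $\{\Cb,\Ub\}$, the only pair among these four bases that is \emph{not} in $\B$ is $\{\Ab,\Cb\}$: indeed, $\{\Ab,\Ub\}$, $\{\Gb,\Cb\}$ and $\{\Gb,\Ub\}$ all belong to $\B$. So within $\Design{G}$, validity reduces to the single condition that no edge of $G$ simultaneously has its $V_1$-endpoint labeled $\Ab$ and its $V_2$-endpoint labeled $\Cb$.

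First I would define
\[
\Phi(S) \;:=\; \{\, u \in V_1 \mid S_u = \Ab \,\} \;\cup\; \{\, v \in V_2 \mid S_v = \Cb \,\}.
\]
For any edge $(u,v) \in E$ (with $u \in V_1$, $v \in V_2$ since $G$ is bipartite), having both $u$ and $v$ in $\Phi(S)$ would mean $S_u = \Ab$ and $S_v = \Cb$, which is exactly the forbidden pair ruled out above. Hence $\Phi(S)$ is an independent set, and $\Phi$ is well-defined.

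Conversely, given $V' \in \IS{G}$, I would define $\Psi(V')$ position-wise: for $u \in V_1$ put $S_u = \Ab$ if $u \in V'$ and $S_u = \Gb$ otherwise; for $v \in V_2$ put $S_v = \Cb$ if $v \in V'$ and $S_v = \Ub$ otherwise. To check that $\Psi(V') \in \Design{G}$, I enumerate the four cases for an edge $(u,v)$: the case $u,v \in V'$ is excluded by independence, and the remaining three yield pairs $\{\Ab,\Ub\}$, $\{\Gb,\Cb\}$, $\{\Gb,\Ub\}$, each in $\B$. So $\Psi$ is well-defined as a map into $\Design{G}$.

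Finally I would verify $\Phi \circ \Psi = \mathrm{id}_{\IS{G}}$ and $\Psi \circ \Phi = \mathrm{id}_{\Design{G}}$; both are immediate from the definitions, since each position in $V_1$ has a binary choice $\Ab/\Gb$ encoded by membership in $V'$, and similarly for $V_2$ with $\Cb/\Ub$. There is no real obstacle here: the whole proof rides on the single combinatorial coincidence that the unique invalid base pair in $\B^c$ compatible with the chosen parity is $\{\Ab,\Cb\}$, which makes ``valid'' on $V_1 \times V_2$-edges coincide exactly with ``not both endpoints marked.'' The only mild care needed is to remember that the bijection is with $\Design{G}$ (i.e.\ after quotienting by the $(\Ab\leftrightarrow\Cb,\Gb\leftrightarrow\Ub)$ symmetry), explaining the factor $2$ in the relation $\NumDesign(G) = 2 \cdot |\Design{G}|$ stated just before the proposition.
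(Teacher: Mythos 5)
Your proof is correct and takes essentially the same route as the paper: your map $\Phi$ coincides with the paper's $\Psi(f)=\{v\mid f(v)\in\{\Ab,\Cb\}\}$, and your explicit inverse is exactly the preimage construction used in the paper's surjectivity argument, both hinging on the observation that $\{\Ab,\Cb\}$ is the unique forbidden pair under the parity restriction. The only difference is presentational (explicit two-sided inverse versus injectivity plus surjectivity), and you additionally verify that the image is an independent set, which the paper leaves implicit.
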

\begin{proof}
Consider the function $\Psi: \Design{G} \to \IS{G}$ defined by $ \Psi(f) := \left\{v\in V\mid f(v)\in\{\Ab,\Cb\}\right\}.$

Let us establish the injectivity of  $\Psi$, i.e. that $\Psi(f)\neq\Psi(f')$ for all $f\neq f'$.
If $f\neq f'$, then there exists a node $v\in V$ such that $f(v)\neq f'(v)$. 
Assume that $v\in V_1$, and remind that the only nucleotides allowed in $V_1$ are $\Ab$ and $\Gb$. Since $f(v)\neq f'(v)$, then we have $\{f(v),f'(v)\}=\{\Ab,\Gb\}$
and we conclude that $\Psi(f)$ differs from $\Psi(f')$ at least with respect to its inclusion/exclusion of $v$.

We turn to the surjectivity of $\Psi$, i.e. the existence of a preimage for each element $S\in \IS{G}$. Let us consider the function $f$ defined as
\begin{align}
 &\forall v_1\in V_1,v_2\in V_2:\notag\\& f(v_1) = \begin{cases} \Ab & \text{if }v_1\in S\\ \Gb & \text{if }v_1\notin S\end{cases} \text{\quad and\quad} 
 f(v_2) = \begin{cases} \Cb & \text{if }v_2\in S\\ \Ub & \text{if }v_2\notin S\end{cases}
\end{align}
One easily verifies that $\Psi(f) = S$. 

We are then left to determine if $f$ is a valid design for $G$, i.e. if for each $(v, v') \in E$ one has  $\{f(v),f(v')\}\in \B.$ Since $G$ is bipartite, any edge in $E$ involves two nodes $v_1\in V_1$ and $v_2\in V_2$. Remark that, among all the possible assignments $f(v_1)$ and $f(v_2)$, the only invalid combination of nucleotides is $(f(v_1),f(v_2)) = (\Ab,\Cb)$. However, such nucleotides are assigned to positions that are in the independent set $S$, and therefore cannot be adjacent. We conclude that $\Psi$ is surjective, and thus bijective.
\end{proof}

Now we can build on the connection between the two problems to obtain complexity results for \NumDesign. Counting independent sets in bipartite graphs ($\#{\sf BIS}$) is indeed a well-studied \#{\sf P}-hard problem~\citep{Ge2012}, from which we immediately conclude:
\begin{corollary}
  \NumDesign is $\#{\sf P}$-hard
\end{corollary}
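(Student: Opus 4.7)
The plan is to finish the corollary by a short polynomial-time Turing reduction from $\#{\sf BIS}$ (counting independent sets in bipartite graphs) to $\NumDesign$, using the bijection established by the preceding Proposition as the only non-trivial ingredient.

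First I would treat the case of connected bipartite inputs directly. Given such a $G$, the Proposition gives $|\IS{G}| = |\Design{G}|$, and the excerpt already records the parity-symmetry identity $\NumDesign(G) = 2\cdot|\Design{G}|$; combining these yields
\begin{equation*}
|\IS{G}| \;=\; \frac{\NumDesign(G)}{2},
\end{equation*}
so a single call to a putative $\NumDesign$ oracle, followed by a division by $2$, recovers $|\IS{G}|$. To cover arbitrary bipartite inputs (since Ge's hardness result is stated for general bipartite graphs), I would compute the connected components $G_1,\dots,G_c$ of $G$ in linear time, invoke the oracle on each $G_i$, and exploit the multiplicativity of $|\IS{\cdot}|$ over disjoint unions:
\begin{equation*}
|\IS{G}| \;=\; \prod_{i=1}^{c} |\IS{G_i}| \;=\; \prod_{i=1}^{c} \frac{\NumDesign(G_i)}{2}.
\end{equation*}
This reduction performs at most $n$ oracle calls on inputs of total size $O(n+|E|)$ and is thus polynomial-time (in fact parsimonious once the $2^c$ factor is removed).

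Concluding, any polynomial-time algorithm for $\NumDesign$ would entail a polynomial-time algorithm for $\#{\sf BIS}$, which is $\#{\sf P}$-hard by~\citet{Ge2012}; hence $\NumDesign$ is $\#{\sf P}$-hard as well, and would place $\#{\sf P} = {\sf FP}$ if it were polynomial. The main obstacle is essentially absent at this point: all the combinatorial content lives in the Proposition's bijection, and the only minor care required is to reduce the general-$\#{\sf BIS}$ instance to the connected instances for which the Proposition is stated, which is handled transparently by the product formula above.
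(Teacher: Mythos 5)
Your proposal is correct and takes essentially the same route as the paper: both hinge on the bijection (giving $|\IS{G}|=\NumDesign(G)/2$ on connected instances) together with multiplicativity of independent-set counts over connected components, the only difference being that the paper packages the component step as ``$\#{\sf BIS}$ remains $\#{\sf P}$-hard on connected graphs'' and makes a single oracle call, while you split the instance and call the \NumDesign{} oracle once per component. Either way, a polynomial-time algorithm for \NumDesign{} would yield one for $\#{\sf BIS}$ and hence $\#{\sf P}={\sf FP}$, exactly as in the paper.
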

\begin{proof}
  Note that $\#{\sf BIS}$ is also \#{\sf P}-hard on connected graphs, as the number of independent sets for a disconnected graph $G$ is given by $|\IS{G}|=\prod_{cc\in CC(G)} |\IS{cc}|$. Thus any efficient algorithm for $\#{\sf BIS}$ on connected instances provides an efficient algorithm for general graphs.
  
  Let us now hypothesize the existence of a polynomial-time algorithm $\mathcal{A}$ for \NumDesign over strongly-connected graphs $G$. Consider the (polynomial-time) algorithm $\mathcal{A}'$ that first executes $\mathcal{A}$ on $G$ to produce $\NumDesign(G)$, and returns $\NumDesign(G)/2=|\Design{G}| = |\IS{G}|$. Clearly $\mathcal{A}'$ solves $\#{\sf BIS}$ in polynomial-time. This means that $\NumDesign$  is at least as hard as $\#{\sf BIS}$, thus does not admit a polynomial time exact algorithm unless $\#{\sf P}={\sf FP}$.
\end{proof}

\section{Conclusion}
Motivated by the---here established---hardness of the problem, we
introduced a general framework and efficient algorithms for design of
RNA sequences to multiple target structures with fine-tuned
properties. Our method combines an FPT stochastic sampling algorithm
with multi-dimensional Boltzmann sampling over distributions
controlled by expressive RNA energy models.  Compared to the
previously best available sampling method (uniform sampling), this
approach generated significantly better seed sequences for instances
of a common multi-target design benchmark.

The presented method enables new possibilities for sequence generation
in the field of RNA sequence design by enforcing additional
constraints, like the \GCb-content, while controlling the energy of
multiple target structures. Thus it presents a major advance over
previously applied ad-hoc sampling and even efficient uniform sampling
strategies. We have shown the practicality of such controlled sequence
generation and studied its use for multi-target RNA design. Moreover,
our framework is equipped to include more complex sequence
constraints, including mandatory/forbidden motifs at specific
positions or anywhere in the designed sequences. Furthermore, the
method can support even negative design principles, for instance by
penalizing a set of alternative helices/structures. Based on this
generality, we envision our framework---in extensions, which still
require delicate elaboration---to support various complex RNA design
scenarios far beyond the sketched applications.

\Final{
Extensions:
\begin{enumerate}
  \item What if the input graph is not bipartite? \\
  Extract a minimal subset of edges such that new graph is bipartite!\\
  $\to{}$ Graph bipartization (cf implementation by Hüffner, JGAA 2009)
  \item Is the optimization version in the \emph{weighted base-pair model} already {\sf NP}-hard?
  \item Does it help to know that there are cliques of size $k$ during the tree decomposition?\\
  If anything, it provides a lower bound on tree decomposition (helps branch and bound algorithms)
  \item What if some degree of freedom is allowed in the structure definition?\\ Can we still use tree decomposition?\\
  This looks like a very reasonable first step towards the definition of an {\em optimal} DP decomposition for RNA folding (with Pseudoknots) or RNA-RNA interactions\ldots
  \item Structures can also be penalized, but they must be specified explicitly.\\
  However, any competing structure must consist of helices, and it is possible to penalize each of the $\Theta(n^2)$ helices of any given length $L$. What would be the impact on treewidth?
  \item More general energy  models $\to$ Turner. (Pseudoknots)
  For this description, we assume that all input structures are non-crossing, while the framework would otherwise support crossing input structures as well as according energy models (e.g. the Hotknots model~\citep{Ren2005}) by considering special pseudoknot loops.
\end{enumerate}\label{lbl:theend}
}

\section*{Acknowledgements}
YP is supported
by the
{\em Agence
  Nationale de la Recherche} and the
FWF (ANR-14-CE34-0011; project RNALands).  SH is supported by the
German Federal Ministry of Education and Research (BMBF support code
031A538B; de.NBI: German Network for Bioinformatics Infrastructure)
and the
Future and Emerging Technologies programme
(FET-Open grant 323987; project RiboNets).
We thank Leonid Chindelevitch for suggesting using the stable sets analogy to optimize our FPT algorithm, and Arie Koster for practical recommendations
for computing tree decompositions.

\bibliographystyle{natbib}
\bibliography{biblio}

{\centering \relsize{+4}Supplementary material\\%
}
\relsize{+1}
\section{Approximate counting and random generation}
In fact, not only is $\#{\sf BIS}$ a reference problem in counting complexity, but it is also a landmark problem with respect to the complexity of approximate counting problems. In this context, it is the representative for a class of $\#{\sf BIS}$-hard problems~\citep{Bulatov2013} that are easier to approximate than $\# {\sf SAT}$, yet are widely believed not to admit any Fully Polynomial-time Randomized Approximation Scheme. Recent results reveal a surprising dichotomy in the behavior of $\#{\sf BIS}$: it admits a Fully Polynomial-Time Approximation Scheme (FPTAS) for graphs of max degree $\le 5$~\citep{Weitz2006}, but is as hard to approximate as the general $\#{\sf BIS}$ problem on graphs of degree $\ge 6$~\citep{Cai2016}. In other words, there is a clear threshold, in term of the max degree, separating (relatively) easy instances from really hard ones.

Additionally, let us note that, from the classic Vizing Theorem, any bipartite graph $G$ having maximum degree $\Delta$ can be decomposed in polynomial time in exactly $\Delta$ matchings. Any such matching can be reinterpreted as a secondary structure, possibly with crossing interactions (\Def{pseudoknots}). These results have two immediate consequences for the pseudoknotted version of the multiple design counting problem.
\begin{corollary}[as follows from~\citep{Weitz2006}]The number of designs compatible with $m\le 5$ pseudoknotted RNA structures can be approximated within any fixed ratio by a deterministic polynomial-time algorithm.
\end{corollary}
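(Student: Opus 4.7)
The plan is to reduce $\NumDesign$ on the pseudoknotted instance directly to the counting of independent sets in a bounded-degree bipartite graph, and then invoke Weitz's FPTAS as a black box. First, I would recall from the earlier proposition that, on connected bipartite compatibility graphs, $\NumDesign(G)=2\cdot|\IS{G}|$ via the explicit bijection $\Psi$; on disconnected $G$, the product $|\IS{G}|=\prod_{cc\in CC(G)}|\IS{cc}|$ reduces the problem component-wise, with a global factor $2^{\#CC(G)}$ accounting for the trivial $(\Ab\!\leftrightarrow\!\Cb/\Gb\!\leftrightarrow\!\Ub)$ symmetry in each component. So any multiplicative $(1\pm\varepsilon)$ approximation of $|\IS{G}|$ transports to a $(1\pm\varepsilon)$ approximation of $\NumDesign(G)$.

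Next, I would bound the maximum degree of the compatibility graph $G_\R = (\{1,\dots,n\},\bigcup_{\ell=1}^m R_\ell)$. Since each target structure $R_\ell$ is a (possibly pseudoknotted) secondary structure, the definition in Section~\ref{sec:problem-statement} enforces degree $\le 1$ per position in each $R_\ell$. Consequently, in the union graph, each vertex has degree at most $m$; for $m\le 5$ we obtain $\Delta(G_\R)\le 5$. Bipartiteness of $G_\R$ follows from the necessary bipartiteness for the existence of a valid design (\citealp{Flamm2001}, as recalled in Section~\ref{sec:counting}); we may assume this, since otherwise $\NumDesign(G_\R)=0$ is trivially returned after a linear-time bipartiteness test.

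Then I would invoke Weitz's theorem~\citep{Weitz2006}: counting independent sets admits an FPTAS on graphs of maximum degree at most $5$, i.e.\ a deterministic algorithm that, given $\varepsilon>0$, outputs an $(1\pm\varepsilon)$-approximation in time polynomial in $n$ and $1/\varepsilon$. Applying it per connected component of $G_\R$ and multiplying the results yields a deterministic polynomial-time $(1\pm\varepsilon)$-approximation of $|\IS{G_\R}|$, from which the corresponding approximation of $\NumDesign(G_\R)=2^{\#CC(G_\R)}\cdot|\IS{G_\R}|$ is obtained (modulo constraining the symmetry as in Section~\ref{sec:counting}).

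I do not anticipate a significant obstacle, since the result is essentially a degree-bound observation followed by a black-box application of Weitz's FPTAS. The only subtlety is being careful that the degree bound in the union of $m$ matchings is truly $m$ (which follows immediately from the degree $\le 1$ condition in each $R_\ell$) and that the symmetry/disconnection factors are handled cleanly so that the approximation ratio is preserved under the polynomial-size product over connected components.
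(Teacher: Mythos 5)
Your proposal is correct and follows essentially the same route as the paper: the union of $m\le 5$ matchings yields a bipartite dependency graph of maximum degree at most $5$, the bijection between $\Design{G}$ and $\IS{G}$ (handled per connected component, with the factor $2^{\#CC(G)}$ for the trivial symmetry) transfers any multiplicative approximation, and Weitz's FPTAS closes the argument. The paper states this only implicitly via the preceding remark on matchings and structures, so your write-up is simply a more explicit version of the same proof.
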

\begin{corollary}[as follows from~\citep{Cai2016}]
  As soon as the number of pseudoknotted RNA structures strictly exceeds $5$, \NumDesign is as hard to approximate as {\#{\sf BIS}}.
\end{corollary}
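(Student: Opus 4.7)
The plan is to give an approximation-preserving reduction from $\#{\sf BIS}$ restricted to bipartite graphs of maximum degree at least $6$ — which is known to be as hard to approximate as the general $\#{\sf BIS}$ problem by~\citet{Cai2016} — to $\NumDesign$ with more than $5$ pseudoknotted target structures. The bijection established in the preceding proposition (between $\Design{G}$ and $\IS{G}$, and hence $\NumDesign(G)=2\,|\IS{G}|$) already hands us an exact correspondence between the counts; what remains is to show that the hard $\#{\sf BIS}$ instances can be realized as dependency graphs of multi-structure design instances with at most $6$ (pseudoknotted) structures.

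Concretely, I would start from an arbitrary connected bipartite graph $G=(V_1\cup V_2,E)$ of maximum degree $\Delta\ge 6$, which by assumption is a $\#{\sf BIS}$-hard family. By Vizing's theorem for bipartite graphs (König's edge-coloring theorem), the edges of $G$ can be partitioned in polynomial time into exactly $\Delta$ matchings $M_1,\dots,M_\Delta$. Each matching $M_\ell$ has the structural property that every vertex is incident to at most one of its edges, so by identifying the vertex set $V_1\cup V_2$ with the sequence positions $\{1,\dots,n\}$ (choosing any ordering such that endpoints $(i,j)$ can be written with $i<j$) and reading each edge of $M_\ell$ as a base pair, $M_\ell$ becomes a valid (possibly pseudoknotted) RNA secondary structure $R_\ell$. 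The resulting set $\R=\{R_1,\dots,R_\Delta\}$ satisfies $G_\R=G$ by construction, and the reduction is polynomial-time since edge-coloring of bipartite graphs is.

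By the proposition, $\NumDesign(G_\R)=2\,|\IS{G}|$. Hence any algorithm that approximates $\NumDesign$ within a ratio $\rho$ on the constructed instance immediately yields a ratio-$\rho$ approximation of $|\IS{G}|$, and conversely. Since $\Delta\ge 6$ translates into $|\R|=\Delta\ge 6>5$ pseudoknotted targets, this shows that $\NumDesign$ on instances with strictly more than $5$ pseudoknotted structures is as hard to approximate (in the FPTAS/FPRAS sense) as $\#{\sf BIS}$ on graphs of max degree $\ge 6$, and therefore as hard as the general $\#{\sf BIS}$ problem.

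The main obstacle I anticipate is the conceptual subtlety rather than a technical one: ensuring that the reduction actually produces a legal \emph{multi-structure RNA design instance}. The key ingredients here are (i) that bipartite edge-coloring uses exactly $\Delta$ colors (so we do not accidentally need $\Delta+1$ structures, which would weaken the threshold) and (ii) that each color class is a matching, so each $R_\ell$ respects the degree-$\le 1$ condition in the definition of a secondary structure even in the pseudoknotted regime where no nesting/planarity is required. Once these two points are explicitly verified and combined with the exact bijection $\NumDesign(G)=2\,|\IS{G}|$, the hardness-of-approximation transfer from $\#{\sf BIS}$ is automatic.
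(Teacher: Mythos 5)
Your proposal is correct and follows essentially the same route the paper sketches: take a \#{\sf BIS}-hard bipartite instance of maximum degree $\Delta\ge 6$, split its edges into exactly $\Delta$ matchings via K\H{o}nig/Vizing edge coloring, read each matching as a (pseudoknotted) target structure so that the dependency graph equals $G$, and transfer approximation hardness through the exact identity $\NumDesign(G)=2\,|\IS{G}|$. Your explicit checks --- that bipartiteness yields exactly $\Delta$ (not $\Delta+1$) color classes and that each class satisfies the degree-$\le 1$ condition --- are precisely the points the paper leaves implicit.
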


It is worth noting that the $\#{\sf P}$ hardness of \NumDesign does not immediately imply the hardness of generating a valid design uniformly at random, as demonstrated constructively by Jerrum, Valiant and Vazirani~\citep{Jerrum1986}. However, in the same work, the authors establish a strong connection between the complexity of approximate counting and the uniform random generation. Namely, they showed that, for problems associated with self-reducible relations, approximate counting is equally hard as (almost) uniform random generation. We conjecture that the (almost) uniform sampling of sequences from multiple structures with pseudoknots is in fact \#{\sf BIS}-hard as soon as the number of input structures strictly exceeds $5$, as indicated by~\citet{Goldberg2004}, motivating even further our parameterized approach.

\section{Tree decomposition for RNA design instances in practice}
\label{appsec:treedecomp}

For studying the typically expected treewidths and tree decomposition
run times in multi-target design instances, we consider five sets of
multi-target RNA design instances of different complexity. Our first
set consists of the Modena benchmark instances.

In addition, we generated four sets of instances of increasing
complexity. The instances of the sets RF3, RF4, RF5, and RF6, each
respectively specify 3,4,5, and 6 target structures for sequence
length 100.  For each instance (100 instances per set), we generated a
set of $k$ ($k=3,\dots,6$) compatible structures as follows
\begin{itemize}
\item Generate a random sequence of length 100;
\item Compute its minimum free energy structure (ViennaRNA package);
\item Add the new structure to the instances if the resulting base pair dependency graph is bipartite;
\item Repeat until $k$ structures are collected.
\end{itemize}
For each instance, we generated the dependency graphs in the base pair
model and in the stacking model. Then, we performed tree decomposition
(using strategy ``GreedyFillIn'' of LibTW~\citep{Dijk2006}) on each dependency
graph. The obtained treewidths are reported in
Fig.~\ref{fig:td-widths}, while Fig.~\ref{fig:td-times} shows the
corresponding run-times of the tree decomposition.
Fig.~\ref{fig:td-example} shows the tree decompositions for an example
instance from set RF3.

\begin{figure}
  \centering\includegraphics[width=0.9\textwidth]{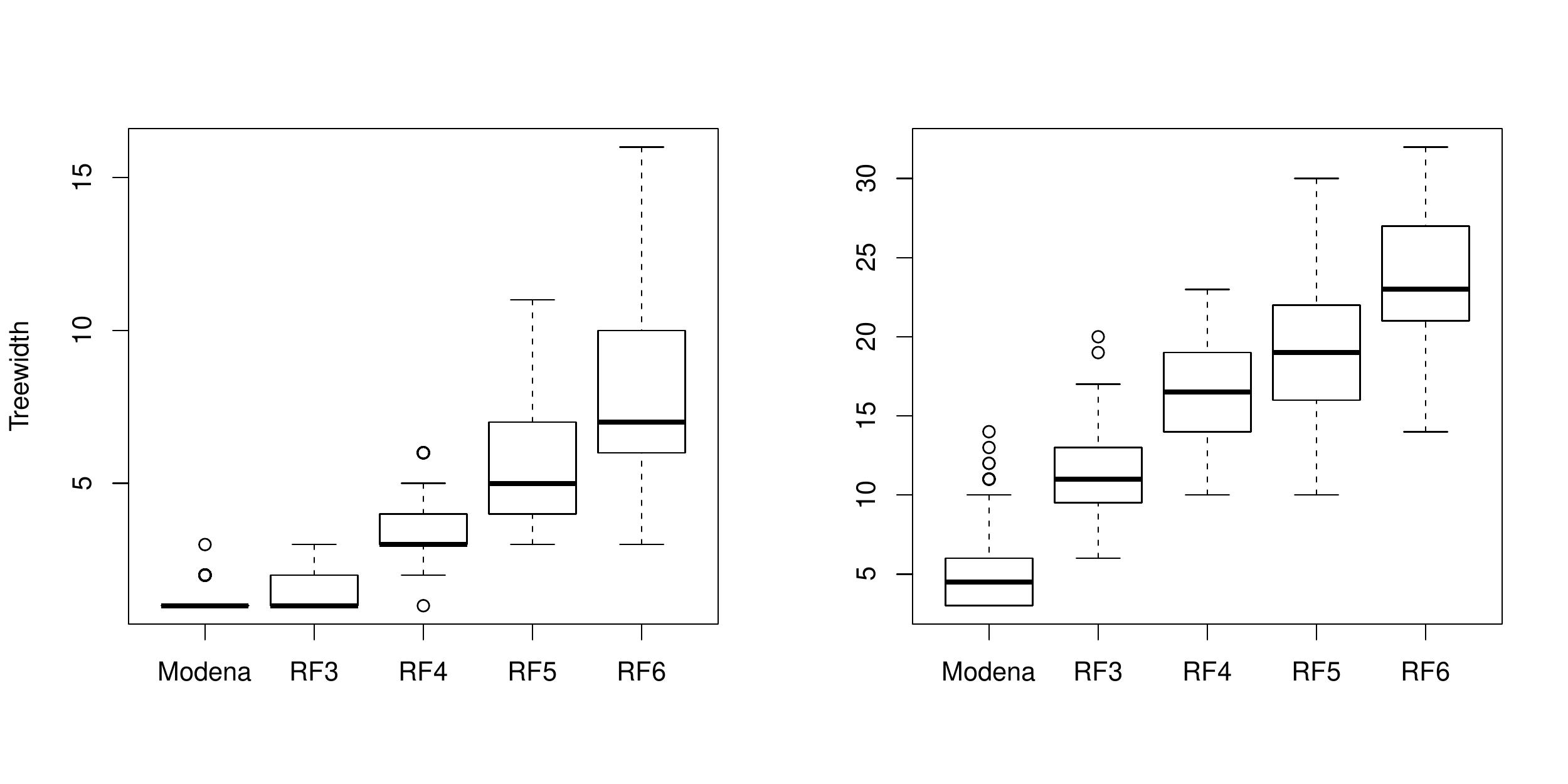}
  \caption{Treewidths for multi-target RNA design instances of
    different complexity. Distributions of treewidths shown as boxplots for the base pair (left) and stacking model (right).}
  \label{fig:td-widths}
\end{figure}

\begin{figure}
  \centering
  \includegraphics[width=0.9\textwidth]{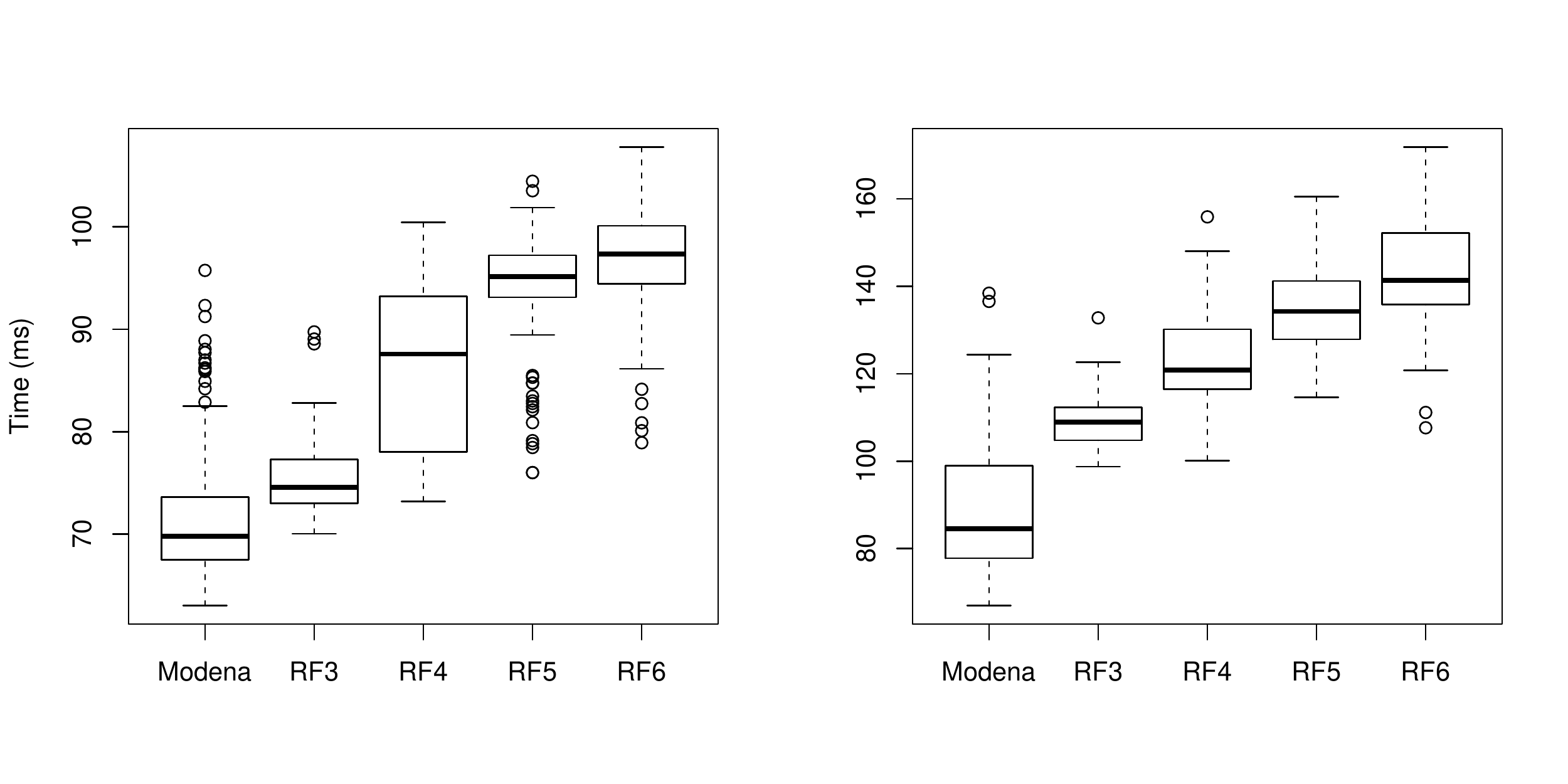}
  \caption{Computation time of tree decompositions for
    multi-target RNA design instances of different complexity.
    Distributions of times (in ms/instance) shown as boxplots for the base pair (left) and stacking model (right).}
  \label{fig:td-times}
\end{figure}

\begin{figure}[t!]
  \centering
  \verbatiminput{Figs/f3_100_0.txt}
  \includegraphicstop[width=0.6\textwidth]{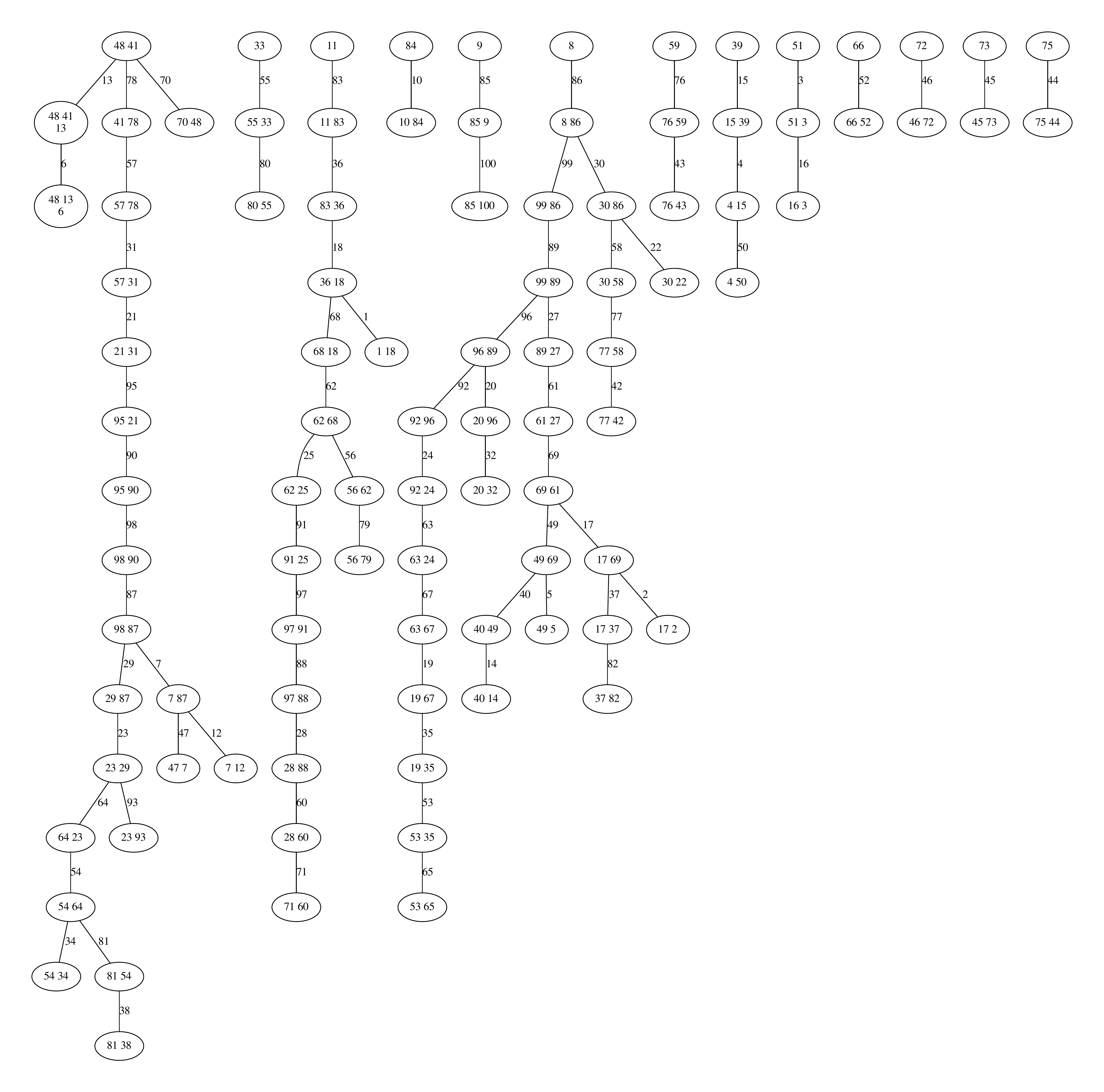}%
  \includegraphicstop[width=0.35\textwidth]{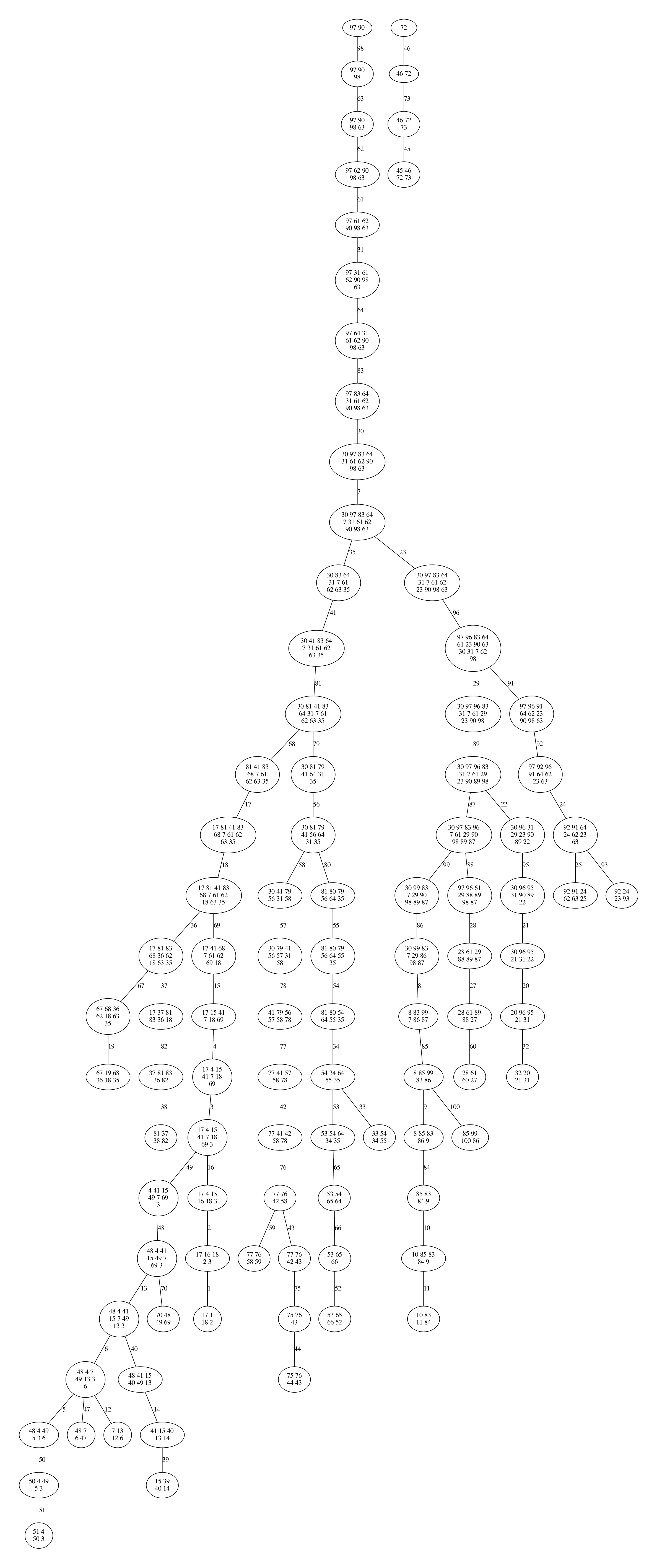}
  \caption{Example instance from RF3 (top) with its tree decompositions in the base pair (left) and stacking model (right). The respective treewidths are 2 and 12.}
  \label{fig:td-example}
\end{figure}

\section{Expressing higher arity functions as cliques in the dependency graph}
\label{appsec:dependency-cliques}

Our implementation relies on external tools for the tree decomposition. Therefore it is not trivial that the weight functions (i.e. terms of the energy function) can be captured within the tree decompositions returned by a specific tool. In other words, it is not immediately clear how one can construct a cluster tree from an arbitrary tree decomposition for a given network.

Crucially, one can show that, as long as the network the arguments $\dep(f)$ of a function $f$ are materialized by a clique within the network $\network$, then there exists at least one node in the tree decomposition that contains $\dep(f)$, thus allowing the evaluation of $f$.
\begin{lemma}\label{lem:cliques}
Let $G=(V,E)$ be an undirected graph, and  $T,\chi$ be a tree decomposition for $G$. For each clique $C\subset V$, there exists a node $n\in T$ such that $C\subset\chi(v)$.
\end{lemma}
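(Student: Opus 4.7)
The plan is to prove Lemma~\ref{lem:cliques} by invoking the \emph{Helly property} of subtrees of a tree: any family of pairwise intersecting subtrees of a tree has a common vertex. This reduces the statement to a direct reading of the three axioms of Definition~\ref{def:treedecomp} together with one classical combinatorial fact about trees.

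First I would associate to each vertex $v \in C$ the set $T_v := \{n \in V(T) \mid v \in \chi(n)\}$. Axiom~1 gives $T_v \neq \varnothing$ and axiom~2 makes $T_v$ a connected subtree of $T$. Next, for any two distinct $u, v \in C$, the hypothesis that $C$ is a clique gives $\{u,v\} \in E$, so axiom~3 furnishes a bag containing both endpoints, that is, a node in $T_u \cap T_v$; the family $\{T_v\}_{v \in C}$ is thus pairwise intersecting. Applying the Helly property then yields a node $n^\star \in \bigcap_{v \in C} T_v$, and by the very definition of the $T_v$ this node satisfies $C \subseteq \chi(n^\star)$, which is the claim.

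The only genuine obstacle is the Helly step, since everything else is a syntactic translation of the tree decomposition axioms. I would either cite Helly's theorem for subtrees of a tree as a standard result, or establish it inline by induction on $|C|$: the base cases $|C| \le 2$ are immediate (the case $|C| = 2$ is precisely axiom~3). For the inductive step I would pick some $v \in C$, use the induction hypothesis on $C \setminus \{v\}$ to obtain a common node $a$ of the subtrees $\{T_u\}_{u \neq v}$, and if $a \notin T_v$ exhibit the vertex $p$ of $T_v$ closest to $a$ in $T$ as a common node of the full family: for each $u \neq v$, the pairwise intersection point of $T_u$ and $T_v$ together with $a \in T_u$ forces $p \in T_u$ by connectivity of $T_u$. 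The delicate point is this last connectivity argument, which is routine but easy to bungle; it is the one place where I would be careful with the write-up.
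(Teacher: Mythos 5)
Your argument is correct, but it follows a different route than the paper. You reduce the lemma to the Helly property of subtrees of a tree: the sets $T_v=\{n\in T\mid v\in\chi(n)\}$ for $v\in C$ are nonempty connected subtrees (axioms~1 and~2), pairwise intersecting because every edge of the clique is covered by some bag (axiom~3), hence they share a common node whose bag contains all of $C$; your inline induction for the Helly step (taking the vertex $p$ of $T_v$ nearest to the common node $a$ of the other subtrees, and noting that every tree path from $a$ into $T_v$ must pass through $p$ while staying inside each connected $T_u$) is the standard proof and is sound. The paper instead roots $T$ arbitrarily, considers the set ${\rm tverts}(n)$ of vertices appearing in the bags of $n$ and its descendants, takes a deepest node $n$ with $C\subseteq{\rm tverts}(n)$, and derives a contradiction with the edge-coverage axiom if some clique vertex were missing from $\chi(n)$ --- an extremal ``deepest node'' argument that avoids any auxiliary combinatorial lemma but is more delicate to write down precisely (connectedness of the $T_v$'s is used implicitly when locating where the missing vertices can appear). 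Your Helly-based proof is the more modular and arguably cleaner one --- it isolates the only nontrivial ingredient as a classical, citable fact and makes the tree-decomposition axioms enter purely syntactically --- at the cost of either importing that fact or proving it by the induction you sketch; the paper's proof is self-contained and shorter on the page, but its correctness hinges on the same connectivity reasoning you flag as the delicate point, just less explicitly packaged.
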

\begin{proof}
  \newcommand{\TransVars}[1]{{\rm tverts}(#1)}
  \newcommand{\Children}[1]{{\rm Children}(#1)}
  For the sake of simplicity, let us consider $T$ as rooted on an arbitrary node, inducing an orientation, and denote by $\TransVars{n}\in V$ the set of vertices found in a node $n$ or its descendants, namely
    $$\TransVars{n} = \chi(n) \bigcup_{n'\in\Children{n}} \TransVars{n'}.$$

  Now consider (one of) the deepest node(s) $n\in T$, such that $C\subset\TransVars{v}$. We are going to prove that $C\subset\chi(n)$, using contradiction, by showing that $C\not\subset\chi(n)$ implies that $T$ is not a tree decomposition.

  Indeed, let us assume that $C\not\subset\chi(n)$, then there exists a node $v'\notin\chi(n)$ whose presence in $\TransVars{n}$ stems from its presence in some descendant of $n$. Let us denote by $n'\neq n$ the closest descendant of $n$ such that $v'\in\chi(n')$. Now, consider a node $v\in C$ such that $v\in \TransVars{v'}$ and $v\notin \TransVars{v'}$. Such a node always exists, otherwise $n'$, and not $n$, would be the deepest node such that $C\subset \TransVars{v'}$. From the definition of the tree decomposition, we know that neither $n'$ nor its descendants may contain $v$. On the other hand, none of the parents or siblings of $n'$ may contain $v'$. It follows that is no node $n''\in T$ whose vertex set $\chi(v'')$ includes, at the same time, $v$ and $v'$. Since both $v$ and $v'$ belong to a clique, one has $\{v,v'\}\in E$. The absence of a node in $T$ capturing an edge in $E$ contradicts our initial assumption that $T$ is tree decomposition for $G$.

  We conclude that $n$ is such that $C\not\subset\chi(n)$.
\end{proof}

This result implies that classic tree decomposition algorithms and, more importantly, implementations can be directly re-used to produce decompositions that capture energy models of arbitrary complexity, functions of arbitrary arity. Indeed, it suffices to add a clique, involving the parameters of the function, and Lemma~\ref{lem:cliques} guarantees that the tree decomposition will feature one node to which the function can be associated.

\section{Correctness of the FPT partition function algorithm}
\label{appsec:correctness}

\begin{theorem}[Correctness of Alg.~\ref{alg:pf}]
  \label{the:pfalgo-correctness}
  As computed by Alg.~\ref{alg:pf} for cluster tree $(T,\chi,\phi)$,
  the messages $\Message{u}{v}$, for all edges $u\to{}v\in T$, yield
  the partition functions of subtree of $u$ for the partial sequences
  $\val\in\partseqs(\separator{u}{v})$, i.e. the messages satisfy
  \begin{equation}
   \evalfor{\Message{u}{v}}{\val} = \sum_{\val'\in\partseqs(\chi(T_r(u))-\separator{u}{v})} \quad
   \prod_{f\in\phi(T_r(u))} \exp(-\beta \evalfor{f}{\substitute{\val'}{\val}}),\label{eq:pfalgo-correct}
 \end{equation}
 where $\chi(T_r(u))$ denotes all $\chi$-assigned positions of nodes in $T_r(u)$;
 respectively $\phi(T_r(u))$; all $\phi$-assigned functions.
\end{theorem}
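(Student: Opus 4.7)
The plan is to prove the identity~\eqref{eq:pfalgo-correct} by structural induction on the rooted tree $T$ oriented toward~$r$, following the postorder in which Algorithm~\ref{alg:pf} populates the messages. The base case treats leaves $u$, i.e.\ nodes whose only incident edge in $T$ points to their parent $v$: then $T_r(u)=\{u\}$, so $\chi(T_r(u))=\chi(u)=\separator{u}{v}\cup\difference{u}{v}$ and $\phi(T_r(u))=\phi(u)$, and, since there are no incoming child messages, the inner loop of the algorithm reduces to summing $\prod_{f\in\phi(u)}\exp(-\beta\,\evalfor{f}{\substitute{\val}{\val'}})$ over $\val'\in\partseqs(\difference{u}{v})$, which is exactly the right-hand side of~\eqref{eq:pfalgo-correct}.

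For the inductive step at an internal node $u$ with parent $v$ and children $w_1,\dots,w_p$, I first want to establish two disjoint-decomposition lemmas coming from the connected-subtree property of Definition~\ref{def:treedecomp}. First, for each child $w$ of $u$, any position lying simultaneously in $\chi(T_r(w))$ and $\chi(u)$ must already belong to $\chi(w)$ (otherwise its occurrence-nodes would not induce a connected subtree through $u$), hence $\chi(T_r(w))\cap\chi(u)=\separator{w}{u}$, so that $\chi(T_r(w))-\separator{w}{u}=\chi(T_r(w))-\chi(u)$. Second, for distinct children $w\neq w'$, the sets $\chi(T_r(w))-\chi(u)$ and $\chi(T_r(w'))-\chi(u)$ are disjoint, since any common position would be forced into $\chi(u)$ by the same connectedness argument. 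Together these yield the disjoint partition
\begin{equation*}
\chi(T_r(u))-\separator{u}{v}\;=\;\difference{u}{v}\;\sqcup\;\bigsqcup_{i=1}^{p}\bigl(\chi(T_r(w_i))-\separator{w_i}{u}\bigr),
\end{equation*}
while the parallel decomposition $\phi(T_r(u))=\phi(u)\sqcup\bigsqcup_i\phi(T_r(w_i))$ is immediate from the uniqueness clause in the definition of a cluster tree.

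With these decompositions in hand, I apply the induction hypothesis to each $\evalfor{\Message{w_i}{u}}{\substitute{\val}{\val'}}$. The substitution is well defined because $\separator{w_i}{u}\subseteq\chi(u)=\separator{u}{v}\cup\difference{u}{v}$, so $\substitute{\val}{\val'}$ determines all positions in $\separator{w_i}{u}$. Plugging the inductive formulas into the algorithm's expression for $\evalfor{\Message{u}{v}}{\val}$ yields nested sums over the pairwise disjoint position sets listed above, which merge into a single sum over $\partseqs(\chi(T_r(u))-\separator{u}{v})$ by standard distributivity, and a product of exponentials indexed by the disjoint union of function sets, which coalesces into $\prod_{f\in\phi(T_r(u))}\exp(-\beta\,\evalfor{f}{\cdot})$. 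This matches~\eqref{eq:pfalgo-correct} and completes the induction.

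I expect the main obstacle to be the clean bookkeeping of the running-intersection property used to justify both disjoint decompositions above; once they are rigorously established, the remainder is routine rearrangement of sums and products. A secondary subtlety is ensuring that every $f\in\phi(T_r(u))$ is legitimately evaluable on the combined partial sequence that appears under the sum. This follows from the cluster-tree invariant $\dep(f)\subseteq\chi(v)$ for $f\in\phi(v)$, together with $\chi(v)\subseteq\chi(T_r(u))$ for any $v\in T_r(u)$, so that the merged assignment of $\val$ with the $\val'$'s chosen in the subtrees indeed determines $\dep(f)$ wherever the corresponding sum variable ranges.
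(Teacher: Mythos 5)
Your proposal is correct and follows essentially the same route as the paper's proof: induction over the rooted cluster tree with the leaf base case, substitution of the induction hypothesis into the message recurrence, and a merge of sums and products justified by exactly the two observations the paper invokes from the connected-subtree property of Def.~\ref{def:treedecomp} (namely $\chi(T_r(w))\cap\chi(u)=\separator{w}{u}$ for each child $w$, and pairwise disjointness of the child-subtree position sets, giving the partition of $\chi(T_r(u))-\separator{u}{v}$). Your explicit checks of evaluability and of the disjoint decomposition of $\phi(T_r(u))$ are consistent with (and slightly more carefully stated than) the paper's argument.
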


\begin{proof}
  Note that in more concise notation, Alg.~\ref{alg:pf} computes
  messages such that
\begin{equation}
  \evalfor{\Message{u}{v}}{\val} := \sum_{\val'\in\partseqs(\difference{u}{v})}\quad
  \prod_{f\in \phi(u) } \exp(-\beta \evalfor{f}{\substitute{\val'}{\val}}) \prod_{(w\to{}u) \in T} \evalfor{\Message{w}{u}}{\substitute{\val'}{\val}}.\label{eq:messages}
\end{equation}

Proof by induction on $T$. If $u$ is a leaf, $\chi(r) = \chi(T_r(u))$,
there are no messages sent to $u$, and $\phi(u) = \phi(T_r(u));$ implying
Eq.~$(\ref{eq:pfalgo-correct}).$
Otherwise, since the algorithm traverses edges in postorder, $u$ received from its children
$w_1,\dots,w_q$ the messages $\Message{w_1}{u}, \dots, \Message{w_q}{u}$, which satisfy Eq.~$(\ref{eq:pfalgo-correct})$ (induction hypothesis). Let $\val\in\partseqs(\separator{u}{v})$; then, $\evalfor{\Message{u}{v}}{\val}$ is computed by the algorithm according to Eq.~$(\ref{eq:messages})$. We rewrite as follows
{\small
\begin{align*}
  & \sum_{\val'\in\partseqs(\difference{u}{v})}\quad
    \prod_{f\in \phi(u) } \exp(-\beta \evalfor{f}{\substitute{\val'}{\val}})
    \prod_{(w\to{}u) \in T} \evalfor{\Message{w}{u}}{\substitute{\val'}{\val}}\\
  & = \sum_{\val'\in\partseqs(\chi(u)-\separator{u}{v})}\quad
    \prod_{f\in \phi(u) } \exp(-\beta \evalfor{f}{\substitute{\val'}{\val}})
    \prod_{i=1}^q \evalfor{\Message{w_i}{u}}{\substitute{\val'}{\val}}\\
  & =_{IH}
    \sum_{\val'\in\partseqs(\chi(u)-\separator{u}{v})}\quad
    \prod_{f\in \phi(u) } \exp(-\beta \evalfor{f}{\substitute{\val'}{\val}})
    \prod_{i=1}^q \sum_{\val''\in\partseqs(\chi(T_r(w_i))-\separator{w_i}{u})} \quad
    \prod_{f\in\phi(T_r(w_i))} \exp(-\beta \evalfor{f}{\substitute{\val''}{\substitute{\val'}{\val}}}) \\
& =_{*}
    \sum_{\val'\in\partseqs(\chi(u)-\separator{u}{v})}\quad
  \sum_{\val''\in\partseqs(\bigcup_{i=1}^q\chi(T_r(w_i))-\separator{w_i}{u})} \quad
  \prod_{f\in \phi(u) } \exp(-\beta \evalfor{f}{\substitute{\val'}{\val}})
  \prod_{f\in\phi(T_r(w_i))} \exp(-\beta \evalfor{f}{\substitute{\val''}{\val'} | \val}) \\
  & =_{(**)} \sum_{\val'\in\partseqs(\chi(T_r(u))-\separator{u}{v})}\quad
    \prod_{f\in \phi(T_r(u)) } \exp(-\beta \evalfor{f}{\substitute{\val'}{\val}})\\
\end{align*}}
To see (*) and (**), we observe:
\begin{itemize}
\item The sets $\chi(u)-\separator{u}{v}$ and
  $\chi(T_r(w_i))-\separator{w_i}{u}$ are all disjoint due to
  Def.~\ref{def:treedecomp}, property 2. First, this property implies
  that any shared position between the subtrees of $w_i$ and $w_j$
  must be in $\chi(w_i)$, $\chi(w_j)$ and $\chi(u)$, thus the
  positions of $\chi(T_r(w_i))-\separator{w_i}{u}$ are
  disjoint. Second, if a position $\chi(T_r(w_i))$ occurs in
  $\chi(u)$, it must occur in $\chi(w_i)$ and consequently in $\separator{u}{v}$.
\item The union of the sets $\chi(u)-\separator{u}{v}$ and
  $\chi(T_r(w_i))-\separator{w_i}{u}$ is $\chi(T_r(u))-\separator{u}{v}).$
\end{itemize}

\end{proof}

\section{General complexity of the partition function computation by cluster tree elimination and generation of samples}
\label{appsec:algcomplexity}

\begin{proposition}
  \label{prop:general-complexity}
Given a cluster tree $(T,\chi,\phi)$, $T=(V,E)$ of the RNA design network $\network=(\X,\F)$ with treewidth $w$ and maximum separator size $s$, computing the partition function by Algorithm~\ref{alg:pf} takes $O((|F|+|V|)\cdot 4^{w+1})$ time and $O(|V| 4^s)$ space (for storing all messages). The sampling step has time complexity of $O((|F|+|V|)\cdot 4^D)$ per sample.
\end{proposition}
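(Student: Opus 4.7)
My plan is a direct cost-accounting argument along the cluster tree, charging (i) the two nested enumerations of partial sequences, (ii) the local function evaluations, and (iii) the message lookups separately, and exploiting that $\phi$ partitions $\F$ across the cluster nodes and that each incoming message is consumed at exactly one outgoing edge.

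For the time bound of Alg.~\ref{alg:pf}, I would fix a directed edge $u\to v\in\edgesToR{}$. The outer loop over $\partseqs(\separator{u}{v})$ and the inner loop over $\partseqs(\difference{u}{v})$ together execute $4^{|\separator{u}{v}|+|\difference{u}{v}|}$ iterations. Since $\separator{u}{v}$ and $\difference{u}{v}$ are disjoint and (by Def.~\ref{def:treedecomp} together with the w.l.o.g.\ normalization of the cluster tree mentioned in Sec.~\ref{sec:PF}) contained in a single bag of size $\leq w+1$, this is bounded by $4^{w+1}$. The body of the inner loop then performs a product over $f\in\phi(u)$ of evaluations $\evalfor{f}{\substitute{\val}{\val'}}$ and a product of table lookups of the incoming child-messages $\Message{w}{u}$; each individual operation is $O(1)$, because $\dep(f)\subseteq\chi(u)$ and each child separator lies in $\chi(u)$, whose size is already absorbed in the $4^{w+1}$ factor. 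Summing the per-edge cost over the $O(|V|)$ edges, and using $\sum_{u\to v\in\edgesToR{}}|\phi(u)|\leq|\F|$ (since $\phi$ partitions $\F$) together with $\sum_{u\to v\in\edgesToR{}}|\{w:(w\to u)\in T\}|\leq|V|-1$ (each tree edge is consumed at exactly one outgoing edge of its head), I obtain the advertised time bound $O((|\F|+|V|)\cdot 4^{w+1})$.

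The space bound is immediate from inspection: Alg.~\ref{alg:pf} stores, for each of the $O(|V|)$ edges in $\edgesToR{}$, a table $\Message{u}{v}$ indexed by $\partseqs(\separator{u}{v})$, and each such table has at most $4^{|\separator{u}{v}|}\leq 4^s$ entries, giving $O(|V|\cdot 4^s)$. The sampling bound follows by the identical per-node accounting applied to Alg.~\ref{alg:sampling}: each call $\sample(u,\cdot)$ performs, exclusive of its recursive invocations on children, $O((|\phi(u)|+|\text{children}(u)|)\cdot 4^D)$ work, because its single loop ranges over $\partseqs(\difference{u}{v})$ of cardinality at most $4^D$ and the inner body is again a product of $O(|\phi(u)|+|\text{children}(u)|)$ unit-cost operations. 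Since one generated sample triggers exactly one call per cluster node, the same partitioning sums as above telescope to $O((|\F|+|V|)\cdot 4^D)$ per sample.

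The one step I expect to have to argue carefully is the ``$O(1)$ local work'' claim: one should make explicit that both function evaluations and message lookups proceed by direct addressing into fixed-arity tables (so unit cost per operation is legitimately absorbed into the exponential factor and does not hide a polynomial dependence on $w$ inside the constant), and that restricting $\substitute{\val}{\val'}$ to a child separator $\separator{w}{u}\subseteq\chi(u)$ yields a fully determined partial sequence, so the lookup $\evalfor{\Message{w}{u}}{\substitute{\val}{\val'}}$ is well-defined. Everything else is routine charge-the-edge bookkeeping against the partitioning of $\F$ by $\phi$ and the tree structure of $T_r$.
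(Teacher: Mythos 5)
Your accounting is correct and is essentially the paper's own argument: per node/edge you enumerate at most $4^{|\chi(u)|}\le 4^{w+1}$ (resp.\ $4^{D}$) partial sequences, each combining $|\phi(u)|$ function evaluations plus one lookup per child message, and the sums $\sum_u(|\phi(u)|+d_u-1)=|\F|+|V|-1$ and the $4^{s}$-sized message tables give the stated time and space bounds. No gaps worth flagging beyond the explicit unit-cost table-lookup remark you already make.
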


\begin{proof}[Proposition~\ref{prop:general-complexity}]
Let $d_u$ denote the degree of node $u$ in $T$, $s_u$ is the size of the separator between
$u$ and its parent in $T$ rooted at $r$. For each node in the cluster tree, Alg.~\ref{alg:pf} computes one message by combining $(|\phi(u)|+d_u-1)$ functions, each time enumerating $4^{|\chi(u)|}$ combinations; Alg.~\ref{alg:sampling} computes $4^{|\chi{u}|-s_u}$ partition functions each time combining $(|\phi(u)|+d_u-1)$ functions.  $4^{|\chi(u)|}$ is bound by $4^{w+1}$ and $4^{|\chi{u}|-s_u}$ by $4^D$; moreover $\sum_{u\in V} (|\phi(u)|+d_u-1) = |F|+|V|-1$.
\end{proof}

\section{Exploiting constraint consistency to reduce the complexity}\label{sec:improvedComplexity}


While Alg.~\ref{alg:pf} computes messages values for
\emph{all} possible combinations of nucleotides for the positions 
in a node, 
we observe here that many such combinations are \emph{not} required for computing all relevant
partition functions. In particular, the algorithm can be restricted to
consider only \Def{valid} combinations, satisfying the (hard)
constraints induced by valid base pairs.

RNA design introduces constraints due to the requirement of canonical
(aka complementary) base pairing, which can be exploited in a
particularly simple and effective way to reduce the complexity.  
As previously noted by \citet{Flamm2001}, the base pair complementarity induces a
bi-partition of each connected component within the base pair
dependency graph, such that the nucleotides assigned to the two set of
nodes in the partition are restricted to values in $\{\Ab,\Gb\}$ and
$\{\Cb,\Ub\}$ respectively. We call a partial sequence
\Def{cc-valid}, iff its determined positions are consistent with such a separation
for all determined positions of the same connected component.


One can now modify Alg.~\ref{alg:pf}, on a tree decomposition
$(T,\chi)$, such that the message values
$\evalfor{\Message{u}{v}}{\val}$ are only computed for cc-valid
partial sequences $\val\in\partseqs(\separator{u}{v})$. Moreover, the
loop over $\val'\in\partseqs(\difference{u}{v})$ is restricted, such
that $\substitute{\val}{\val'}$ are cc-valid. Analogous restrictions are 
then implemented in the sampling algorithm Alg.~\ref{alg:sampling}.

The correctness of the modified algorithm follows from the same
induction argument, where the message computation over one node
evaluates messages from its children only at cc-valid partial
sequences. The result of this computation is a message, which corresponds 
to the partition function, restricted to cc-valid partial sequences. 
Since invalid sequences have infinite energy, they do not contribute to 
the partition function, and the partition function restricted to cc-valid sequences coincides 
with the initial one.

This restriction drastically improves the time
complexity. Indeed, for any given node $v$, the original algorithm sends message for 
 $\chi(v):=\substitute{\val}{\val'}$ such that
$\val\in\partseqs(\separator{u}{v})$ and
$\val'\in\partseqs(\difference{u}{v})$, while the modified algorithm
only considers cc-valid assignments for $\chi(v)$. Remark that, in any connected component $cc$, 
assigning some nucleotide to a position reduces cc-valid assignments to (at most) 
two alternatives for each of the $|cc|-1$ remaining positions. It follows that, for a node $v$ featuring positions from $\#cc(v)$ distinct connected components $\{cc_1,cc_2\ldots\}$ in the base pair dependency graph, the number of cc-valid assignments to positions in $\chi(v)$ is exactly
$$4^{\#cc(v)}\prod_{i=1}^{\#cc(v)} 2^{|cc_i|-1} = 2^{\#cc(v)}\, 2^{|\chi(v)|} \in \Theta(2^{\#cc}\, 2^{w+1}),$$
for a single node, where $\#cc$ is the total number of connected components in the base pair dependency graph, and $w$ is the tree-width of the tree decomposition $T$. Since the number of nodes in $T$ is in $\Theta(n)$, and the number of atomic energy contributions associated with $k$ structures is in $\mathcal{O}(k\,n)$, then the overall complexity grows like $\Theta(n\, k\, 2^{\#cc}\, 2^{w+1})$.

%
%

Finally, we remark that even stronger time savings could be possible
in practice, since cc-valid partial sequences can still violate
complementarity constraints, e.g. by assigning C and A to positions in
different sets of a partition, thus satisfying the bi-partition
constraints, where the positions base pair directly, rendering the
partial sequence invalid. Moreover, applications of the sampling
framework, can introduce additional constraints that further reduce
the number of valid partial subsequences. However, exploiting all such
constraints, in a complete and general way, would likely cause significant
implementation overhead, while not significantly improving
the asymptotic complexity.

\section{Parameters for the base pair and the stacking energy model}
\label{appsec:modelparameters}

We trained parameters for two RNA energy models to approximate the
Turner energy model, as implemented in the \Software{ViennaRNA}
package.  In the \Def{base pair model}, the total energy of a sequence $S$
for an RNA structure $R$ is given as sum of base pair energies, where
we consider six types of base pairs distinguishing by the bases $\Ab$-$\Ub$,
$\Cb$-$\Gb$ or $\Gb$-$\Ub$ (symmetrically) and between stacked and non-stacked base
pairs; here, we consider base pairs $(i,j)\in R$ \Def{stacked} iff
$(i+1,j-1)\in R$, otherwise \Def{non-stacked}. In the \Def{stacking model},
our features are defined by the stacks, i.e. pairs $(i,j)$ and
$(i+1,j-1)$ which both occur in $R$; we distinguish 18 types based on
$S_i,S_j,S_{i+1},S_{j-1}$ (i.e. all combinations that allow canonical
base pairs; the configurations $S_i,S_j,S_{i+1},S_{j-1}$ and
$S_{i+1},S_{j-1},S_j,S_i$ are symmetric).

Both models describe the energy assigned to a pair of sequence and
structure in linear dependency of the number of features and their
weights. We can thus train weights for linear predictors of the Turner
energy in both models.

For this purpose, we generated 5000 uniform random RNA sequences of
random lengths between 100 and 200. For each sample, we predict the
minimum free energy and corresponding structure using the ViennaRNA
package; then, we count the distinguished features (i.e., base pair or
stack types). The parameters are estimated fitting linear models
without intercept (\texttt{R} function \texttt{lm}). For both models,
\texttt{R} reports an adjusted R-squared value of 0.99. The resulting
parameters are reported in Tables~\ref{tab:basepairmodel} and
\ref{tab:stackingmodel}.

For validating the trained parameters, we generate a second
independent test set of random RNA sequences in the same
way. Fig.~\ref{fig:training-cor} shows correlation plots for the
trained parameters in the base pair and stacking models for predicting
the Turner energies in the test set with respective correlations of
$0.95$ an $0.94$.

\begin{figure}
  \centering
  A)\includegraphicstop[width=0.4\textwidth,trim=0 0 0 50,clip]{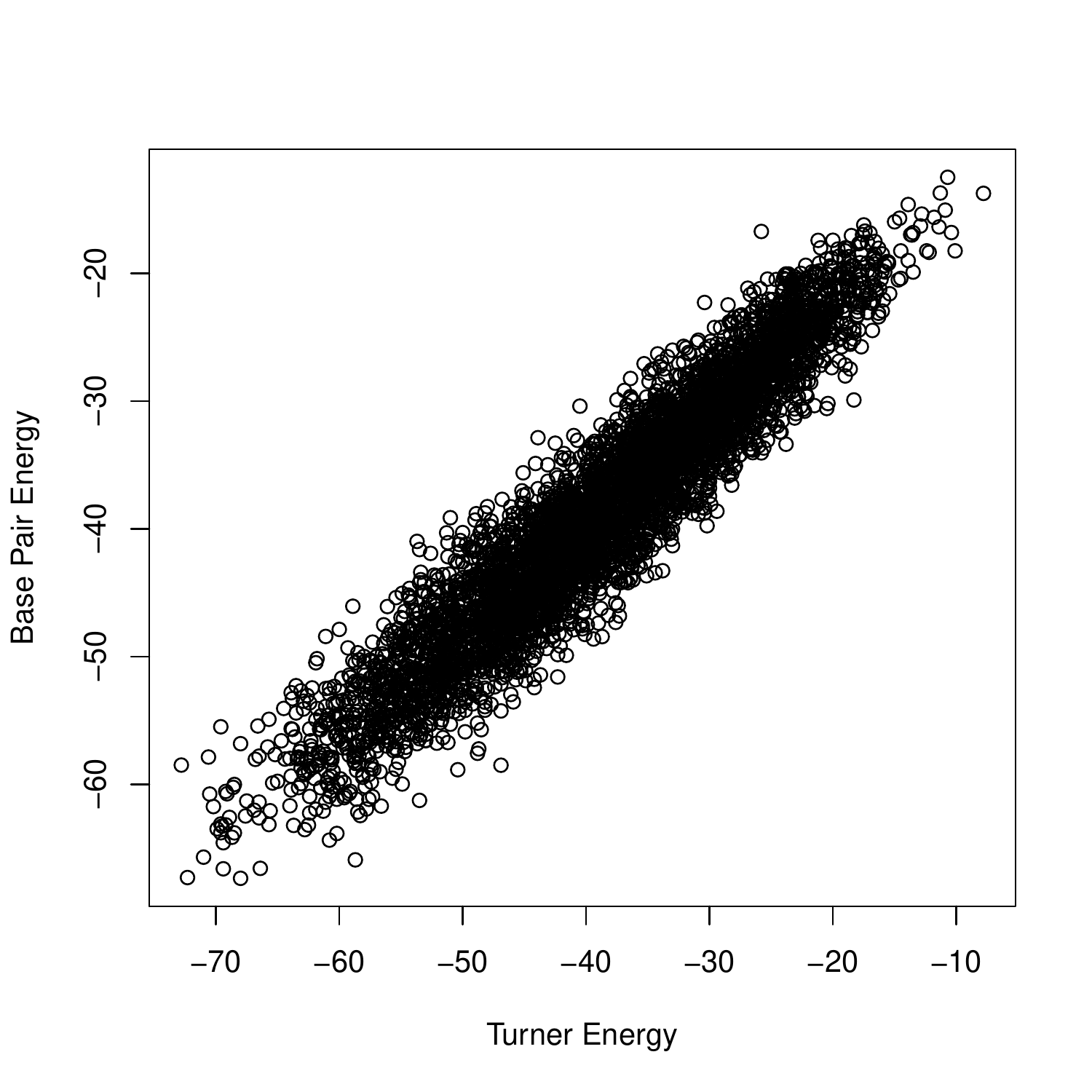}
  B)\includegraphicstop[width=0.4\textwidth,trim=0 0 0 50,clip]{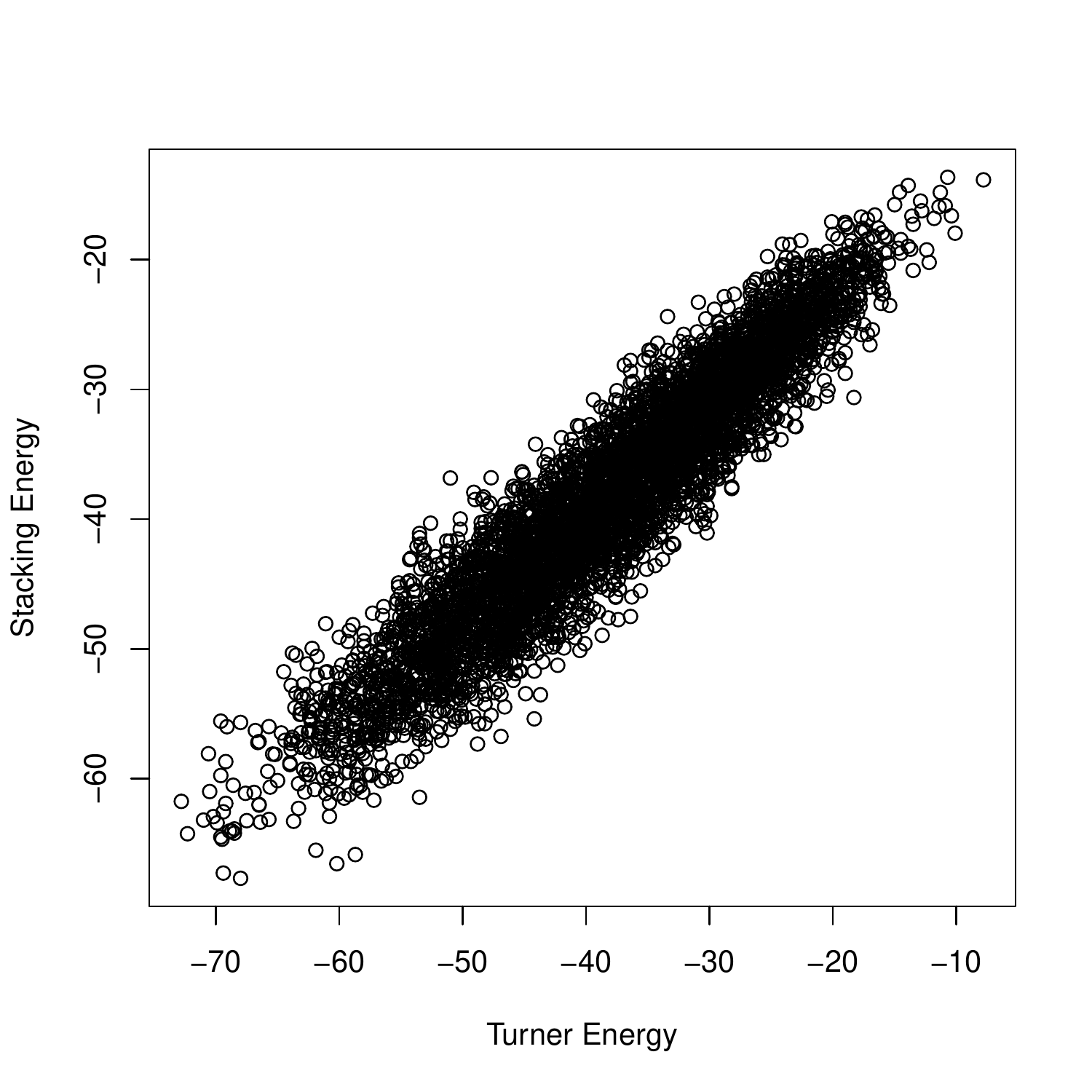}
  \caption{Validation of the trained parameters for the A) base pair
    model B) stacking model for predicting energies of the
    independently sampled sequences in the test set. We show
    correlation plots against the Turner energies reported by the
    ViennaRNA package.}
  \label{fig:training-cor}
\end{figure}

\begin{table}[b]
  \centering
  \caption{Trained weights for the base pair energy model.}
  \label{tab:basepairmodel}
  \begin{tabular}{c@{\quad}c@{\quad}c@{\quad}|@{\quad}c@{\quad}c@{\quad}c}
    \multicolumn{3}{c}{non-stacked} & \multicolumn{3}{c}{stacked}\\
    AU      & CG       & GU      & AU       & GC       & GU \\\hline
    1.26630 & -0.09070 & 0.78566 & -0.52309 & -2.10208 & -0.88474
  \end{tabular}
\end{table}


\begin{table}[b]
  \centering
  \caption{Trained weights for the stacking energy model (the rows specify $S_i,S_j$; the columns, $S_{i+1},S_{j-1}$; we do not show the symmetric weights).}
  \label{tab:stackingmodel}
  \begin{tabular}{c@{\quad}|@{\quad}c@{\quad}c@{\quad}c@{\quad}c@{\quad}c@{\quad}c@{\quad}c@{\quad}c}
       & AU & CG & GC & GU & UA & UG \\\hline
    AU &
-0.18826 &
-1.13291 &
-1.09787 &
-0.38606 &
-0.26510 &
-0.62086
    \\
    CG &
-1.11752 &
-2.23740 &
-1.89434 &
-1.22942 &
-1.10548 &
-1.44085
    \\
    GU &
-0.55066 &
-1.26209 &
-1.58478 &
-0.72185 &
-0.49625 &
-0.68876
    \\
  \end{tabular}
\end{table}

\section{Monotonicity of the partial derivatives within weight calibration}
\label{sec:weight_derivatives}
In weighted distributions, one witnesses a fairly predictable impact of the variation of any weight $\pi_\ell$ over the expected value $\mathbb{E}(E_\ell\mid \pmb{\pi})$, $\pmb{\pi}:= (\pi_1\cdots\pi_k)$, of the free energy $E_\ell$ of structure $\ell$.  Let us  first remind the probability of a sequence $S$ in the weighted distribution
\begin{align*}
  \mathcal{B}_{\pmb{\pi}}(S) &= \prod_{i=1}^{k} \pi_i^{-E_i(S)}, &
  \partfun{\pmb{\pi}}&=\sum_{S'}\mathcal{B}_{\pmb{\pi}}(S') &
    \text{and}& &
  \mathbb{P}(S\mid \pmb{\pi}) &= \frac{\mathcal{B}_{\pmb{\pi}}(S)}{\partfun{\pmb{\pi}}}.
  \end{align*}

Remind also the definition of the expectation of $E_\ell(S)$, for $S$ a Boltzmann-distributed sequence:
$$\mathbb{E}(E_\ell\mid \pmb{\pi}) = \sum_S E_\ell(S)\cdot \mathbb{P}(S\mid \pmb{\pi}).$$

We first remark that the partial derivative of $\mathcal{B}$ yields
\begin{align*}
  \frac{\partial \mathcal{B}_{\pmb{\pi}}(S)}{\partial \pi_\ell} = -E_\ell(S)\cdot \pi_\ell^{-E_\ell(S)-1}\prod_{\substack{i=1\\i\neq \ell}}^{k} \pi_i^{-E_i(S)} = \frac{-E_\ell(S)}{\pi_\ell}\cdot\mathcal{B}_{\pmb{\pi}}(S) = -E_\ell(S)\cdot \mathbb{P}(S\mid \pmb{\pi})\cdot \frac{\partfun{\pmb{\pi}}}{\pi_\ell }
\end{align*}
while the partial derivative of $\mathcal{Z}$ gives
\begin{align*}
  \frac{\partial \partfun{\pmb{\pi}}(S)}{\partial \pi_\ell} = \sum_{S'} -E_\ell(S')\cdot \mathbb{P}(S'\mid \pmb{\pi})\cdot \frac{\partfun{\pmb{\pi}}}{\pi_\ell }= -\mathbb{E}(E_\ell\mid \pmb{\pi})\cdot \frac{\partfun{\pmb{\pi}}}{\pi_\ell}.
\end{align*}
From these expressions, we conclude that
\begin{align*}
  \frac{\partial \mathbb{E}(E_\ell\mid \pmb{\pi})}{\partial \pi_\ell} &= \sum_S E_\ell(S)\cdot\frac{\partial \mathbb{P}(S\mid \pmb{\pi})}{\partial \pi_\ell}\\
  & = \sum_S E_\ell(S)\left(\frac{\frac{\partial \mathcal{B}_{\pmb{\pi}}(S)}{\partial \pi_\ell}}{\mathcal{Z}_{\pmb{\pi}}} - \frac{\frac{\partial \mathcal{Z}_{\pmb{\pi}}}{\partial \pi_\ell}\cdot\mathcal{B}_{\pmb{\pi}}(S)}{\mathcal{Z}_{\pmb{\pi}}^2}\right)\\
  & = \sum_S E_\ell(S)\left(\frac{-E_\ell(S)\cdot \mathbb{P}(S\mid \pmb{\pi})\cdot \frac{\mathcal{Z}_{\pmb{\pi}}}{\pi_\ell }}{\mathcal{Z}_{\pmb{\pi}}} + \frac{\mathbb{E}(E_\ell\mid \pmb{\pi})\cdot \frac{\mathcal{Z}_{\pmb{\pi}}}{\pi_\ell}\cdot\mathcal{B}_{\pmb{\pi}}(S)}{\mathcal{Z}_{\pmb{\pi}}^2}\right)\\
  & = \sum_S - \frac{E_\ell(S)^2\cdot\mathbb{P}(S\mid \pmb{\pi})}{\pi_\ell} + \frac{\mathbb{E}(E_\ell\mid \pmb{\pi})}{\pi_\ell}\sum_S E_\ell(S)\cdot \mathbb{P}(S\mid \pmb{\pi})\\
  & = -\frac{\mathbb{E}(E_\ell^2\mid \pmb{\pi}) - \mathbb{E}(E_\ell\mid \pmb{\pi})^2}{\pi_\ell}
\end{align*}
In the numerator of the above expression, one recognizes the variance of the distribution of $E_\ell$.
Remark that a variance is always non-negative and, in our case, also strictly positive for $\pmb{\pi}$, $\pi_\ell>0$, as soon as there exist at least two distinct sequences $S$ and $S'$ such that $E_\ell(S)\neq E_\ell(S')$.
Moreover, weights are always positive so the partial derivative in $\pi_\ell$ is always positive. Ultimately, it is always possible to increase (resp. decrease) the expected free energy of a structure by simply decreasing (resp. increasing) its weight, supporting our weight optimization procedure.
\newpage
\section{Detailed result of quality analysis}\label{sec:validity}
N/A values correspond to data that could not be obtained by the time of this submission for two main reasons: In the case of the initial sampling (left columns), they correspond to instances which, in conjunction with an expressive energy model, resulted in very high tree-width, leading to unreasonable memory requirements incompatible with our available computing resources; The case of missing values after optimization  (right columns), they indicate situations where the initial optimization took too long ($\approx$ 1 day) and was killed.
\subsection{RNAtabupath (2 structures)}
{
\begin{longtable}{@{}cr@{\hspace{1em}}r@{\hspace{1em}}r@{\hspace{1em}}r@{\hspace{1em}}r@{\hspace{2em}}r@{\hspace{1em}}r@{\hspace{1em}}r@{\hspace{1em}}r@{\hspace{1em}}r@{}}
&\multicolumn{2}{c}{Boltzmann}&\multicolumn{2}{l}{Uniform}&&\multicolumn{2}{l}{\begin{minipage}[b]{1.8cm} Boltzmann\\Optimized\end{minipage}}&\multicolumn{2}{l}{\begin{minipage}[b]{1.8cm} Uniform\\Optimized\end{minipage}}&\\ \toprule
Name&Mean&StDev&Mean&StDev&\%Gain&Mean&StDev&Mean&StDev&\%Gain\\ \midrule
alpha\_operon&17.18&4.18&25.86&5.97&51&2.12&0.81&2.36&0.88&11\\
amv&14.37&2.90&27.26&5.65&90&3.67&0.90&4.27&1.08&16\\
attenuator&10.74&3.04&21.44&4.77&100&1.39&0.55&1.81&0.80&30\\
dsrA&11.77&2.58&18.23&4.62&55&3.71&0.94&3.80&0.95& 2\\
hdv&22.08&4.88&35.48&6.68&61&3.63&1.16&4.57&1.48&26\\
hiv&41.22&6.74&73.00&9.40&77&16.87&3.45&29.89&5.36&77\\
ms2&13.35&3.73&19.79&4.55&48&2.41&0.76&2.66&0.97&10\\
rb1&17.70&4.35&37.38&7.14&111&2.64&0.90&3.82&1.31&45\\
rb2&16.63&4.16&25.17&5.85&51&2.75&0.86&3.07&0.97&12\\
rb3&22.47&4.71&41.18&7.09&83&3.79&1.01&5.05&1.52&33\\
rb4&26.15&4.30&47.89&6.80&83&10.57&&11.80&1.82&12\\
rb5&30.52&5.45&54.24&7.76&78&6.10&1.97&9.22&2.78&51\\
ribD&49.84&7.01&81.27&9.07&63&20.47&3.17&31.01&4.47&51\\
s15&11.00&2.97&18.37&4.53&67&1.96&0.64&2.12&0.77& 8\\
sbox&25.05&4.84&50.03&8.04&100&6.34&1.49&8.90&2.19&40\\
spliced&9.65&2.98&17.60&4.39&82&2.13&0.44&2.41&0.59&13\\
sv11&20.98&4.66&36.83&6.91&76&N/A&N/A&4.20&1.12&N/A\\
thim&29.35&5.39&48.32&6.95&65&8.69&1.90&12.12&2.54&39\\
\midrule
Average&21.67&4.38&37.74&6.45&74&5.84&1.31&7.95&1.76&28\\\bottomrule
\end{longtable}}
\subsection{RNAdesign (3 structures)}
{
\begin{longtable}{@{}cr@{\hspace{1em}}r@{\hspace{1em}}r@{\hspace{1em}}r@{\hspace{1em}}r@{\hspace{2em}}r@{\hspace{1em}}r@{\hspace{1em}}r@{\hspace{1em}}r@{\hspace{1em}}r@{}}
&\multicolumn{2}{c}{Boltzmann}&\multicolumn{2}{l}{Uniform}&&\multicolumn{2}{l}{\begin{minipage}[b]{1.8cm} Boltzmann\\Optimized\end{minipage}}&\multicolumn{2}{l}{\begin{minipage}[b]{1.8cm} Uniform\\Optimized\end{minipage}}&\\ \toprule
Name&Mean&StDev&Mean&StDev&\%Gain&Mean&StDev&Mean&StDev&\%Gain\\\midrule
sq100&20.67&4.90&31.78&5.46&54&5.81&1.53&8.33&1.92&43\\
sq10&19.91&5.24&26.62&5.59&34&3.50&0.81&3.88&0.94&11\\
sq11&19.38&4.81&27.76&5.69&43&2.81&0.77&3.31&0.89&18\\
sq12&17.19&3.62&22.62&4.98&32&2.70&0.65&2.79&0.68& 3\\
sq13&18.54&3.95&29.95&5.56&62&10.24&2.18&16.63&3.13&62\\
sq14&21.49&4.23&37.82&5.55&76&7.25&1.63&9.90&2.10&37\\
sq15&21.91&3.78&36.32&5.13&66&11.15&1.75&15.49&2.34&39\\
sq16&15.99&4.22&27.70&5.58&73&2.51&0.69&2.96&0.94&18\\
sq17&20.71&3.72&33.62&5.33&62&5.83&1.08&7.36&1.49&26\\
sq18&19.64&3.91&35.26&5.13&80&5.83&1.41&7.97&2.00&37\\
sq19&14.78&3.33&28.64&5.34&94&3.84&0.77&4.66&1.04&21\\
sq1&14.85&3.84&21.89&5.48&47&2.02&0.65&2.20&0.69& 9\\
sq20&17.39&3.82&27.90&4.93&60&3.93&1.04&5.18&1.48&32\\
sq21&21.64&3.94&36.27&5.16&68&8.28&1.82&11.41&2.40&38\\
sq22&17.65&4.13&31.57&5.45&79&5.66&1.51&10.02&2.50&77\\
sq23&19.44&3.66&32.38&5.40&67&12.05&1.85&18.69&3.18&55\\
sq24&18.01&4.01&27.92&5.30&55&3.59&0.93&4.25&1.10&19\\
sq25&17.54&5.20&26.65&6.11&52&1.35&0.34&1.45&0.40& 7\\
sq26&17.63&4.47&24.46&5.64&39&3.11&0.62&3.38&0.68& 9\\
sq27&17.93&3.61&28.51&5.30&59&7.55&1.49&10.96&2.38&45\\
sq28&14.78&3.71&29.35&5.99&99&2.32&0.49&2.66&0.62&15\\
sq29&17.61&3.72&26.95&5.08&53&4.44&0.98&5.22&1.21&18\\
sq2&22.43&4.19&37.25&5.15&66&10.34&1.71&14.32&2.35&39\\
sq30&19.11&4.04&31.75&5.54&66&6.06&1.61&8.99&2.19&48\\
sq31&19.97&3.87&31.92&5.20&60&5.12&1.19&6.34&1.55&24\\
sq32&21.96&4.00&31.65&5.23&44&5.24&1.29&6.84&1.76&31\\
sq33&15.11&4.30&22.75&5.52&51&1.78&0.43&2.08&0.63&17\\
sq34&25.00&4.17&37.46&5.03&50&12.46&1.84&18.73&2.57&50\\
sq35&19.25&4.12&32.72&5.35&70&5.90&1.25&7.96&1.82&35\\
sq36&12.77&2.91&26.76&5.39&110&3.19&0.77&3.61&0.94&13\\
sq37&18.71&3.88&32.44&5.62&73&3.24&0.90&4.21&1.25&30\\
sq38&14.55&3.56&29.02&5.29&99&2.79&0.86&3.70&1.22&33\\
sq39&22.75&3.95&33.06&4.98&45&9.20&1.76&11.87&2.33&29\\
sq3&20.38&4.07&36.86&5.60&81&4.69&1.19&6.30&1.72&34\\
sq40&19.46&4.35&31.76&5.41&63&4.05&0.94&4.96&1.19&23\\
sq41&14.41&3.46&27.85&5.57&93&3.33&0.71&3.91&0.88&17\\
sq42&18.67&3.87&35.20&5.55&89&5.32&1.24&6.96&1.56&31\\
sq43&17.32&3.97&31.63&5.38&83&3.53&0.89&4.27&1.15&21\\
sq44&19.37&3.84&31.75&5.09&64&5.52&1.29&7.26&1.60&31\\
sq45&18.53&4.77&27.40&5.53&48&2.54&0.67&2.84&0.74&12\\
sq46&17.92&3.80&28.83&5.09&61&4.45&0.90&5.41&1.20&22\\
sq47&15.09&3.78&28.23&5.82&87&3.02&0.78&3.45&0.85&14\\
sq48&18.90&4.23&36.27&5.78&92&4.69&1.08&6.71&1.67&43\\
sq49&20.13&4.69&27.24&5.39&35&3.15&0.76&3.66&0.95&16\\
sq4&19.31&4.69&29.25&5.34&51&2.36&0.79&3.24&1.15&37\\
sq50&17.01&3.92&28.43&5.35&67&3.15&0.73&3.54&0.86&12\\
sq51&22.18&4.14&36.07&5.20&63&11.56&1.79&16.57&2.88&43\\
sq52&19.36&3.93&31.04&5.20&60&11.04&2.21&16.25&3.04&47\\
sq53&13.78&3.72&24.48&5.46&78&1.65&0.39&1.88&0.53&14\\
sq54&18.07&3.59&33.99&5.57&88&8.15&1.57&12.35&2.59&51\\
sq55&21.93&4.68&30.11&5.21&37&4.59&0.82&5.43&1.13&18\\
sq56&17.83&3.90&31.69&5.46&78&4.22&0.75&4.84&0.93&15\\
sq57&16.30&3.97&28.48&5.45&75&1.96&0.58&2.28&0.71&16\\
sq58&12.74&2.86&32.97&5.80&159&3.77&0.82&4.89&1.22&29\\
sq59&20.44&4.54&33.02&5.57&62&6.05&1.50&8.34&2.08&38\\
sq5&9.97&2.72&17.94&4.93&80&1.29&0.33&1.41&0.39& 9\\
sq60&22.64&4.85&36.86&5.46&63&6.90&1.62&11.07&2.33&60\\
sq61&14.45&3.78&30.82&6.03&113&3.07&0.76&3.83&1.01&25\\
sq62&21.08&5.02&35.27&5.35&67&9.76&1.67&16.93&2.79&73\\
sq63&13.44&3.60&27.20&5.63&102&1.58&0.36&1.85&0.53&17\\
sq64&19.21&3.87&34.88&5.24&82&6.10&1.53&9.45&2.29&55\\
sq65&19.33&4.53&27.51&5.15&42&13.95&2.94&19.40&3.61&39\\
sq66&18.18&3.70&30.79&5.12&69&6.93&1.42&10.05&2.03&45\\
sq67&20.63&3.41&37.06&5.18&80&16.11&1.97&27.21&3.40&69\\
sq68&23.09&4.33&33.33&5.08&44&N/A&N/A&9.23&1.94&N/A\\
sq69&18.79&3.58&35.06&5.21&87&10.77&1.94&18.38&3.00&71\\
sq6&20.32&3.77&38.48&5.48&89&5.55&1.32&8.63&2.04&56\\
sq70&13.59&3.31&31.06&5.52&129&2.59&0.80&3.80&1.17&47\\
sq71&20.07&4.05&32.71&5.39&63&5.57&1.06&6.61&1.30&19\\
sq72&15.02&3.22&24.48&4.94&63&2.02&0.57&2.49&0.84&23\\
sq73&14.88&3.74&28.44&5.56&91&2.34&0.69&3.10&1.04&32\\
sq74&17.80&4.10&30.13&5.30&69&4.12&0.79&5.03&1.04&22\\
sq75&19.66&4.05&31.19&5.19&59&5.63&1.26&7.52&1.70&34\\
sq76&19.91&3.84&32.44&5.12&63&9.12&1.92&13.89&2.39&52\\
sq77&18.69&3.46&34.83&5.21&86&6.01&1.25&8.61&1.75&43\\
sq78&18.28&3.48&32.94&6.01&80&5.39&1.12&7.07&1.51&31\\
sq79&21.57&4.71&30.73&5.47&42&13.10&2.89&17.86&3.45&36\\
sq7&20.28&3.76&32.95&5.41&63&5.51&1.70&6.68&1.90&21\\
sq80&15.62&3.92&28.29&5.30&81&1.85&0.53&2.41&0.88&30\\
sq81&21.24&4.57&29.92&5.44&41&3.74&0.95&4.67&1.33&25\\
sq82&8.85&2.35&19.89&5.36&125&1.39&0.21&1.47&0.28& 6\\
sq83&14.13&3.60&28.92&5.45&105&2.77&0.76&3.52&1.01&27\\
sq84&15.60&3.98&24.12&5.26&55&2.54&0.47&2.85&0.65&12\\
sq85&15.56&3.39&26.09&5.18&68&2.19&0.62&2.70&0.82&23\\
sq86&21.54&4.95&34.21&5.77&59&4.66&1.21&5.92&1.51&27\\
sq87&16.45&3.85&32.65&5.81&98&3.50&0.62&4.14&0.86&18\\
sq88&20.05&5.09&33.45&5.86&67&2.50&0.72&3.17&0.99&27\\
sq89&14.77&4.20&26.15&5.59&77&1.79&0.50&2.01&0.62&13\\
sq8&15.85&4.14&30.34&6.02&91&2.85&0.79&3.85&1.15&35\\
sq90&15.25&3.99&29.83&5.80&96&2.10&0.64&2.90&1.06&38\\
sq91&17.29&3.93&27.19&5.08&57&3.69&1.11&4.48&1.31&21\\
sq92&19.06&3.57&33.07&5.34&74&4.77&1.11&6.83&1.78&43\\
sq93&19.48&4.11&29.17&5.65&50&3.29&0.80&3.86&1.03&17\\
sq94&14.72&3.21&27.05&5.43&84&3.98&0.77&4.66&0.98&17\\
sq95&17.60&3.72&33.42&5.38&90&12.86&2.20&22.76&3.84&77\\
sq96&15.35&3.84&31.62&5.68&106&2.69&0.67&3.52&1.02&31\\
sq97&18.28&3.89&29.22&5.03&60&4.31&1.13&5.32&1.42&23\\
sq98&18.84&4.57&29.30&5.42&56&2.78&0.73&3.59&0.97&29\\
sq99&19.74&4.19&30.95&5.28&57&3.94&0.98&4.89&1.33&24\\
sq9&17.30&4.26&26.15&5.72&51&1.66&0.44&1.79&0.55&8\\
\midrule
Average&21.67&4.38&37.74&6.45&74&5.84&1.31&7.95&1.76&28\\\bottomrule
\end{longtable}}
\subsection{RNAdesign (4 structures)}
{
\begin{longtable}{@{}cr@{\hspace{1em}}r@{\hspace{1em}}r@{\hspace{1em}}r@{\hspace{1em}}r@{\hspace{2em}}r@{\hspace{1em}}r@{\hspace{1em}}r@{\hspace{1em}}r@{\hspace{1em}}r@{}}
&\multicolumn{2}{c}{Boltzmann}&\multicolumn{2}{l}{Uniform}&&\multicolumn{2}{l}{\begin{minipage}[b]{1.8cm} Boltzmann\\Optimized\end{minipage}}&\multicolumn{2}{l}{\begin{minipage}[b]{1.8cm} Uniform\\Optimized\end{minipage}}&\\ \toprule
Name&Mean&StDev&Mean&StDev&\%Gain&Mean&StDev&Mean&StDev&\%Gain\\\midrule
sq100&21.70&4.41&33.19&4.93&53&12.52&2.62&17.14&3.10&37\\
sq10&20.66&5.16&27.00&5.24&31&N/A&N/A&8.02&2.05&N/A\\
sq11&22.23&4.23&30.73&5.31&38&6.09&1.21&7.70&1.70&26\\
sq12&18.30&3.55&23.38&4.93&28&3.75&0.53&3.94&0.64& 5\\
sq13&21.01&3.57&31.66&4.93&51&15.63&1.69&21.61&2.69&38\\
sq14&22.44&3.16&39.77&5.28&77&17.56&2.21&25.04&3.21&43\\
sq15&22.66&3.72&36.31&5.01&60&17.45&2.38&26.80&3.34&54\\
sq16&18.22&4.12&28.54&5.33&57&3.27&0.74&3.78&0.90&16\\
sq17&23.04&4.01&35.87&5.28&56&12.76&2.54&18.43&2.86&44\\
sq18&21.20&3.49&35.78&5.39&69&8.15&1.47&10.75&2.02&32\\
sq19&15.36&2.73&32.12&4.95&109&13.22&1.73&23.64&3.41&79\\
sq1&15.14&3.83&22.32&5.28&47&2.55&0.56&2.66&0.56& 4\\
sq20&16.75&3.09&29.25&4.74&75&9.26&1.67&13.40&2.41&45\\
sq21&23.98&3.72&38.18&5.14&59&16.15&2.10&23.31&3.16&44\\
sq22&20.38&3.81&32.90&5.12&61&8.42&1.44&12.39&2.03&47\\
sq23&18.39&3.36&32.82&5.34&78&12.07&1.93&19.58&3.26&62\\
sq24&20.05&4.10&29.72&5.29&48&5.40&1.15&7.28&1.68&35\\
sq25&21.48&5.35&28.83&5.78&34&2.95&0.66&3.30&0.77&12\\
sq26&20.83&4.63&27.39&5.25&32&4.20&0.76&4.80&0.96&14\\
sq27&17.14&3.48&29.41&5.24&72&7.58&1.45&11.67&2.34&54\\
sq28&23.50&4.48&33.84&6.07&44&5.59&1.03&6.69&1.39&20\\
sq29&18.14&3.65&27.22&4.85&50&5.54&0.92&6.26&1.17&13\\
sq2&N/A&N/A&N/A&N/A&N/A&N/A&N/A&N/A&N/A&N/A\\
sq30&19.40&3.76&32.40&4.95&67&8.89&1.56&12.39&2.19&39\\
sq31&24.97&3.80&32.68&5.19&31&8.77&1.18&13.05&2.15&49\\
sq32&19.62&3.76&31.03&4.97&58&5.50&1.29&7.57&1.71&38\\
sq33&19.51&4.25&27.65&5.12&42&4.23&0.91&5.37&1.16&27\\
sq34&25.60&3.50&37.58&4.90&47&13.41&1.77&20.62&2.68&54\\
sq35&21.87&3.93&35.50&5.73&62&9.66&1.56&13.80&2.31&43\\
sq36&11.65&2.59&26.29&5.48&126&3.33&0.60&3.86&0.90&16\\
sq37&25.80&4.48&36.19&4.90&40&15.84&2.03&23.46&2.85&48\\
sq38&17.74&3.43&32.11&5.14&81&4.95&0.93&6.20&1.38&25\\
sq39&21.54&3.83&32.74&5.21&52&9.79&1.81&13.10&2.29&34\\
sq3&20.19&4.06&39.87&5.23&97&14.60&2.26&26.21&3.32&79\\
sq40&18.68&4.18&33.41&5.62&79&4.32&0.92&5.49&1.30&27\\
sq41&15.86&3.46&30.06&5.30&89&5.56&0.94&6.59&1.17&19\\
sq42&21.79&3.25&38.72&5.11&78&N/A&N/A&23.27&3.45&N/A\\
sq43&23.21&3.99&33.89&5.33&46&7.27&1.30&9.08&1.76&25\\
sq44&21.65&3.59&33.47&4.99&55&8.04&1.42&10.91&1.97&36\\
sq45&19.01&4.79&27.79&5.58&46&3.45&0.65&3.81&0.83&10\\
sq46&20.08&3.60&30.69&4.89&53&7.68&1.30&9.96&1.69&30\\
sq47&17.29&3.73&29.94&5.26&73&4.52&1.03&5.24&1.20&16\\
sq48&28.78&4.03&38.24&5.57&33&16.87&1.76&25.52&3.19&51\\
sq49&19.19&4.75&27.33&5.42&42&3.14&0.73&3.66&0.86&17\\
sq4&19.71&4.46&29.44&5.30&49&3.35&0.87&4.25&1.18&27\\
sq50&19.36&4.11&30.60&5.49&58&4.99&0.79&5.71&0.98&14\\
sq51&N/A&N/A&38.68&5.02&N/A&N/A&N/A&27.99&2.48&N/A\\
sq52&21.40&3.37&33.80&5.43&58&16.30&2.43&25.10&3.67&54\\
sq53&15.81&3.50&28.75&5.32&82&4.29&1.05&6.22&1.53&45\\
sq54&18.16&3.45&35.10&4.86&93&15.32&2.23&26.27&3.37&71\\
sq55&22.47&4.43&31.94&5.15&42&11.05&2.41&15.19&2.98&37\\
sq56&21.11&4.12&35.25&5.93&67&12.27&2.00&18.68&2.97&52\\
sq57&18.94&3.91&30.86&5.51&63&3.69&0.71&4.51&0.90&22\\
sq58&19.68&2.97&35.80&5.41&82&9.78&1.44&16.76&2.66&71\\
sq59&20.35&3.75&33.08&5.20&63&17.24&2.82&24.28&3.93&41\\
sq5&11.60&2.54&19.61&5.12&69&1.92&0.35&2.12&0.49&11\\
sq60&21.97&4.37&38.39&5.44&75&17.44&2.44&29.57&3.89&70\\
sq61&19.07&3.46&34.14&5.24&79&6.90&1.33&10.30&1.97&49\\
sq62&25.71&5.09&34.50&4.88&34&12.33&1.94&17.33&2.62&41\\
sq63&17.21&3.45&30.71&5.37&78&5.49&1.36&7.38&1.84&34\\
sq64&19.89&3.52&35.51&4.96&79&14.00&2.12&22.04&3.12&57\\
sq65&18.69&4.06&29.18&5.34&56&12.55&2.26&17.63&2.94&41\\
sq66&17.50&3.40&30.59&5.16&75&8.71&1.62&13.67&2.43&57\\
sq67&22.63&4.21&36.73&5.06&62&17.17&2.21&26.90&3.34&57\\
sq68&N/A&N/A&35.00&5.13&N/A&N/A&N/A&17.96&2.39&N/A\\
sq69&N/A&N/A&36.07&4.91&N/A&N/A&N/A&24.53&2.08&N/A\\
sq6&22.32&3.77&40.27&5.41&80&6.25&1.39&9.46&2.08&51\\
sq70&16.97&4.07&31.84&5.51&88&3.64&0.88&5.09&1.32&40\\
sq71&19.34&4.07&31.81&5.24&64&5.09&0.95&6.26&1.20&23\\
sq72&16.60&3.01&29.22&4.92&76&6.20&1.18&7.67&1.57&24\\
sq73&18.64&3.44&30.86&5.16&66&12.54&2.35&21.15&4.05&69\\
sq74&17.24&3.88&31.19&5.47&81&13.12&3.31&21.49&4.11&64\\
sq75&18.44&3.60&31.25&5.19&70&7.24&1.37&11.19&2.21&55\\
sq76&20.76&3.38&33.74&4.94&63&16.90&2.38&25.18&3.38&49\\
sq77&21.92&3.29&35.83&5.26&63&10.59&1.63&13.96&2.25&32\\
sq78&23.60&3.17&36.23&5.47&54&19.14&2.52&26.07&3.61&36\\
sq79&24.77&4.43&32.49&5.23&31&18.07&2.89&23.78&3.64&32\\
sq7&19.94&3.54&32.78&5.29&64&5.37&1.65&6.59&1.89&23\\
sq80&18.63&3.88&29.75&5.27&60&2.87&0.64&3.64&1.01&26\\
sq81&23.57&3.63&36.02&5.24&53&17.71&2.20&25.76&3.17&45\\
sq82&13.21&3.25&25.45&5.00&93&3.05&0.67&4.06&1.00&33\\
sq83&14.37&3.21&31.96&5.41&122&3.86&0.84&5.55&1.31&44\\
sq84&19.05&3.89&29.35&5.00&54&5.21&0.99&6.38&1.32&23\\
sq85&17.32&3.91&30.80&5.43&78&10.61&2.09&17.53&2.93&65\\
sq86&25.19&4.04&36.79&5.66&46&9.04&1.93&14.14&2.62&56\\
sq87&16.91&3.89&32.84&5.34&94&4.94&0.79&6.08&1.18&23\\
sq88&22.92&4.99&35.87&5.65&56&5.22&1.00&6.58&1.30&26\\
sq89&17.52&3.91&29.24&5.45&67&4.14&0.86&4.83&1.11&17\\
sq8&25.56&4.49&33.42&5.21&31&N/A&N/A&12.55&2.05&N/A\\
sq90&18.67&3.62&33.21&5.33&78&5.40&0.98&7.26&1.62&34\\
sq91&18.00&3.83&26.82&4.85&49&8.01&1.72&11.08&2.41&38\\
sq92&20.51&3.28&37.62&4.84&N/A&17.62&2.17&29.08&3.47&65\\
sq93&22.63&4.66&33.40&5.25&48&11.43&2.16&16.37&3.00&43\\
sq94&14.71&3.25&28.06&5.39&91&3.81&0.72&4.51&0.97&18\\
sq95&N/A&N/A&36.93&5.11&N/A&N/A&N/A&23.98&2.67&N/A\\
sq96&17.20&3.70&33.01&5.46&92&5.98&1.20&8.09&1.71&35\\
sq97&19.78&3.85&30.01&5.09&52&6.05&1.25&7.45&1.60&23\\
sq98&21.39&4.33&31.13&5.16&46&4.77&0.77&5.67&1.03&19\\
sq99&20.33&4.39&29.97&5.07&47&4.05&1.07&4.95&1.28&22\\
sq9&18.86&4.22&29.51&5.22&56&3.93&0.88&5.06&1.35&29\\
\\\midrule
Average&19.94&3.84&32.29&5.24&63&8.77&1.48&13.13&2.13&37\\ \bottomrule
\end{longtable}}


\end{document}